\newcommand{\FF}{\vspace*{\medskipamount}}
\newcommand{\BBB}{\vspace*{-\bigskipamount}}
\newcommand{\cA}{\mathcal{A}}
\newcommand{\cC}{\mathcal{C}}
\newcommand{\cE}{\mathcal{E}}
\newcommand{\cG}{\mathcal{G}}
\newcommand{\cI}{\mathcal{I}}
\newcommand{\cO}{\mathcal{O}}
\newcommand{\mN}{\mathbb{N}}
\newlength{\figurewidth}
\newcommand{\remove}[1]{}
\newlength{\captionwidth}
\newcommand{\Paragraph}[1]{\BBB\paragraph{#1}}
\newcommand{\qed}{\hfill $\square$ \smallbreak}
\newenvironment{proof}{\noindent\textbf{Proof:}}{\qed}
\newtheorem{theorem}{Theorem}
\newtheorem{lemma}{Lemma}
\newtheorem{corollary}{Corollary}
\newtheorem{proposition}{Proposition}
\begin{document}

\renewcommand{\baselinestretch}{1.1}

\parskip 1pt

\title{Disconnected Agreement in Networks Prone to Link Failures
		\vfill}

\author{Bogdan S. Chlebus \footnotemark[1]
	\and
	Dariusz R. Kowalski \footnotemark[1] 
	\and
	Jan Olkowski \footnotemark[2]
	\and
	J\k{e}drzej Olkowski \footnotemark[3]}

\footnotetext[1]{School of Computer and Cyber Sciences, Augusta University, Augusta, Georgia, USA.}

\footnotetext[2]{Department of Computer Science, 	University of Maryland, College Park, Maryland, USA.}

\footnotetext[3]{Wydzia{\l} Matematyki, Mechaniki i Informatyki, Uniwersytet Warszawski, Warszawa, Poland.}

\date{}

\maketitle

\vfill

\begin{abstract}
We consider deterministic distributed algorithms for reaching agreement in synchronous networks of arbitrary topologies.
Links are bi-directional and prone to failures while nodes stay non-faulty at all times.
A faulty link may omit messages.
Agreement among nodes is understood as holding in each connected component of a network obtained by removing faulty links.
We call ``disconnected agreement'' the algorithmic problem of reaching such agreement.
We introduce the concept of stretch,  which is the number of connected components of a network, obtained by removing faulty links, minus~$1$ plus the sum of diameters of connected components.
We define the concepts of ``fast'' and ``early-stopping''  algorithms for disconnected agreement by referring to stretch.
We consider trade-offs between the knowledge of nodes, size of messages, and running times of algorithms. 
A network has $n$ nodes and  $m$ links.
Nodes are normally assumed to know their own names and ability to associate communication with local ports. 
If we additionally assume that a bound~$\Lambda$ on stretch is known to all nodes, then there is an algorithm for disconnected agreement working in time $\cO(\Lambda)$ using messages of $\cO(\log n)$ bits.
We give a general disconnected agreement algorithm operating in~$n+1$ rounds that  uses messages of $\cO(\log n)$ bits.
Let~$\lambda$ be an unknown stretch occurring in an execution; we give an algorithm working in time~$(\lambda+2)^3$ and using messages of $\cO(n\log n)$ bits.
We show that disconnected agreement can be solved in the optimal $\cO(\lambda)$ time, but at the cost of increasing message size to~$\cO(m\log n)$. 
We also design an algorithm that uses only~$\cO(n)$ non-faulty links and works in time~$\cO(n m)$, while nodes start with their ports mapped to neighbors and messages carry $\cO(m\log n)$ bits.
We prove lower bounds on the performance of disconnected-agreement solutions that refer to the parameters of  evolving network topologies and the knowledge available to nodes.

\vfill
\noindent
\textbf{Keywords:}
network, 
synchrony, 
omission link failure, 
agreement, 
message size. 
\end{abstract}

\vfill

~

\thispagestyle{empty}

\setcounter{page}{0}

\newpage

\section{Introduction}

\label{sec:introduction}

We introduce a variant of agreement and present deterministic distributed algorithms for this problem in synchronous networks.
Nodes represent processing units and links model bi-directional  communication channels between pairs of nodes.
Links are prone to failures but nodes stay operational at all times.
A faulty link may not  convey a  message transmitted at a round. 

Once a link omits a message, it may omit messages in the future.
A link that has omitted a message manifested its faultiness and is considered \emph{unreliable}.
A safe approach for a communication algorithm is to treat unreliable links as if they crashed and not use them for transmissions after they manifested unreliability. 
Based on this intuition, we model a network with link failures as evolving through a chain of sub-networks obtained by removing unreliable links.

The disconnected agreement problem  is about nodes reaching  agreement on a common value.
Each node starts with an input value and it eventually decides on some value.
The requirements of reaching disconnected agreement are modeled after the problem of consensus.
That problem is defined by the three requirements of termination, validity, and agreement, which constrain the output values and the process of deciding.
Those conditions may depend on the nature of faults in a distributed system, for example there are different conditions specifying what the consensus problem is in the models of nodes prone to crashes and for arbitrary (Byzantine) faults of nodes.
We address a problem like consensus in a model of dynamic networks whose topologies evolve because of removal of faulty links.

Deleting links affects the topology of a network, possibly even decomposing a network into disjoint connected components.
If such a decomposition occurs fast enough, it may be impossible for some nodes to learn of input values in other components whose presence affects the ultimate decision for a given agreement-seeking algorithm.
To make reaching agreement possible, we need either to restrict link failures, or to modify the specification of the agreement problem, or both.

In this work we study a scenario in which a faulty link may not transmit a message through at some rounds. 
Such an omission fault is a relatively benign type of link failure.
Messages that are delivered, even through faulty links, are not tempered with and are received as transmitted.
We do not impose any restrictions on which links may be faulty, in that, a removal of faulty links may produce an arbitrary sub-network.
In particular, it is possible that some links simply crash such that this disconnects the network.

The problem to reach agreement needs to be suitably reformulated to reflect the possibility of a network becoming decomposed into disconnected sub-networks by removal of faulty links.
We study agreement that allows nodes in different connected components of the network obtained by removing unreliable links to decide on different values but still requires nodes within a connected component to decide on the same  value. 
It is natural not to demand that nodes in different connected components decide on the same value, due to impossibility to accomplish such a goal without a proper flow of information.
The process of breaking into connected components occurs in time, which creates a challenge for the nodes in a connected component to decide on the same value. 
Any two nodes may eventually end up in different connected components but what matters are connected components as they exist at the rounds in which such two nodes actually decide.

Our approach to specifying disconnected agreement subsumes the models of link crashes, which is stricter in that once a link manifests its faultiness by omitting a message it will omit all the messages in the future.
Node crashes may also decompose a network into separate connected components, and our approach subsumes the model of node crashes as well.
To see this for node crashes, observe that a node's crash could be simulated by crashing all the links incident to a node at the round of its crash.
If a node's crash occurs in such a simulation then the node stays operational but it constitutes a single-node connected component, so any decision by the node on some initial input value  satisfies agreement and validity.  

Let $n$ denote the number of nodes and $m$ the number of links in an initial network.
Nodes are equipped with names, which are unique integers represented by $\cO(\log n)$ bits.
Initial input values are assumed to be encodable with $\cO(\log n)$ bits.
A message is considered ``short'' if it carries $\cO(\log n)$ bits, and it is ``linear'' if it carries $\cO(n\log n)$ bits.
A network operates with ``minimal knowledge'' if each node initially  knows its individual name, the initial input value,  and it can distinguish among ports as sources of the incoming and outlets for the outgoing communication.

We use a network's dynamic attribute called ``stretch,'' which is an integer determined by the number of connected components and their diameters.
The purpose of using stretch is to consider scalability of disconnected agreement solutions to networks evolving through link failures and their removal from the network. 
In particular, we define ``fast''  and ``early-stopping''  algorithms solving disconnected agreement by referring to stretch.
Along with time performance, we consider the amount of communication, measured as an upper bound on the number of bits in each individual message.
We also study performance of algorithms measured by ``link use'' defined as the number of reliable links through which nodes transmit messages.

\Paragraph{A summary of the results.}

We introduce the problem of disconnected agreement and give deterministic algorithms for this problem in synchronous networks with links prone to failures.
Faulty links may omit messages.
An upper bound on stretch, denoted $\Lambda$, could be given to all nodes, with an understanding that faults occurring in an execution are restricted such that the actual stretch never surpasses~$\Lambda$.
An algorithm solving disconnected agreement with a known upper bound~$\Lambda$ on the stretch  is considered ``fast'' if it runs in time~$\cO(\Lambda)$.
A fast solution to disconnected agreement is discussed in Section~\ref{sec:fast-agreement}. 
We also show a lower bound which demonstrates that, for each natural number $\lambda$ and an algorithm solving disconnected agreement in networks prone to link failures, there exists a network that has stretch $\lambda$ and such that each execution of the algorithm on this network takes at least $\lambda$ rounds. 
In Section \ref{sec:minimal-knowledge}, we show how to solve disconnected agreement in~$n+1$ rounds with short messages in networks where nodes have minimal knowledge.
We give an algorithm relying on minimal knowledge working in time $(\lambda+2)^3$ and using linear messages of $\cO(n\log n)$ bits, where $\lambda$ is an unknown stretch  occurring in an execution; this algorithm is presented in Section~\ref{sec:linear-messages}.
A disconnected agreement solution is considered ``early-stopping'' if it operates in time proportional to the unknown stretch actually occurring in an execution.
In Section~\ref{sec:early-stopping}, we develop an early-stopping solution to disconnected agreement relying on minimal knowledge that employs messages of $\cO(m\log n)$ bits. 
We propose to count the number of reliable links used by a communication algorithm as its performance metric.
To make this performance measure meaningful, the nodes need to start knowing their neighbors, in  having a correct mapping of communication ports to neighbors.
In Section~\ref{sec:link-use}, we give a solution to disconnected agreement that uses at most an asymptotically optimum number $2n$  of reliable links and works in $\cO(n m)$ rounds, without knowing the size $n$ of the network.
We also show, in Section~\ref{sec:link-use}, that if the nodes start with their ports not mapped on neighbors,  then any disconnected agreement solution has to use $\Omega(m)$ links in some networks of $\Theta(m)$ links, for all numbers~$n$ and~$m$ such that $n\le m\le n^2$.
A summary of algorithms developed in this paper, with their performance bounds, is  given in Table~\ref{table:summary}.


\newcommand{\RB}{\raisebox{12pt}{}}
\newcommand{\LB}{\raisebox{-6pt}{}}

\begin{table}[t]
\begin{center}
\begin{tabular}{c || l | l | l | l | l  }
algorithm / section& time & message size & \# links & knowledge & lower bound  
\LB\\ 
\hline\hline
\RB \LB
\textsc{Fast-Agreement}\,$(\Lambda)$  / \ref{sec:fast-agreement}& $\Lambda$ \hfill $^{~~\dag}$  &  $\cO(\log n)$ & $\cO(m)$ &  $\Lambda$ known  & time $\lambda\le\Lambda$  \\ 
\hline 
\RB \LB
\textsc{SM-Agreement} / \ref{sec:minimal-knowledge}&  $n+1$ &  $\cO(\log n)$ &  $\cO(m)$ &  minimal & time $\lambda$   \\ 
\hline
\RB \LB
\textsc{LM-Agreement} / \ref{sec:linear-messages} & $(\lambda+2)^3$ & $\cO(n\log n)$&$\cO(m)$ & minimal  & time $\lambda$  \\
\hline
\RB \LB
\textsc{ES-Agreement} / \ref{sec:early-stopping} & $\lambda+2$ \hfill $^{~\dag}$ & $\cO(m\log n)$ & $\cO(m)$ & minimal  & time $\lambda$ \\ 
\hline
\RB \LB
\textsc{OL-Agreement} / \ref{sec:link-use} & $\cO(n m)$ & $\cO(m\log n)$ & $2n$ \hfill $^{~~\dag}$ & neighbors known  & \# links $\Omega(n)$  
\end{tabular}
\end{center}

\caption{\label{table:summary}
A summary of the given deterministic distributed  algorithms for disconnected agreement and their respective performance bounds.
The notation~$n$ means the number of nodes and $m$ the number of links in the initial network; none of these numbers is ever assumed to be known.
The parameter~$\Lambda$ denotes an upper bound on stretch if it is known.
The notation $\lambda$ means the stretch actually occurring in an execution by the time all nodes halt.
The dagger symbol $\dag$ indicates the asymptotic optimality of the respective upper bound.
}
\end{table}

\Paragraph{The previous work on agreement in networks.}

Dolev~\cite{Dolev82} studied Byzantine consensus in networks with faulty nodes and gave connectivity conditions sufficient and necessary for a solution to exist; see also Fischer et al.~\cite{FischerLM86}, and Hadzilacos~\cite{Hadzilacos87}.
Khan et al.~\cite{KhanNV19} considered a related problem in the model with restricted Byzantine faults, in particular, in the model requiring a node to broadcast identical messages to all neighbors at a round.
Tseng and Vaidya~\cite{TsengV15}  presented necessary and sufficient conditions for the solvability of consensus  in directed graphs under the models of crash and Byzantine failures.
In a related work, Choudhury et al.~\cite{ChoudhuryGPRS18} provided a time-optimal algorithm to solve consensus in directed graphs with node crashes.
Casta{\~{n}}eda et al.~\cite{CastanedaFPRRT19} considered networks with nodes prone to crashes with an upper bound~$t$ on the number of crashes and showed that, as long at the network remained $(t+1)$-connected, Consensus was solvable in a number of rounds determined by how conducive the network was to broadcasting.
Chlebus et al.~\cite{ChlebusKO20} studied networks with Byzantine nodes such that the removal of faulty nodes leaves a network that is sufficiently connected; they gave fast solutions to consensus and showed a separation of consensus with Byzantine nodes from consensus with Byzantine nodes using message authentication, with respect to asymptotic time performance in suitably connected networks.
Tseng~\cite{Tseng16} and Tseng and Vaidya~\cite{TsengV16} surveyed work on reaching agreement in networks subject to node faults.
Winkler and Schmidt~\cite{WinklerS2019} gave a survey of recent work on consensus in directed dynamic networks.


\begin{table}[t]
\begin{center}
\begin{tabular}{c || l | l | l | l | l  }
algorithm & time & message size & \# links & knowledge & comments
\LB\\ 
\hline\hline
\RB \LB
Kuhn et. al.~\cite{KuhnLO10} & $O(n)$  &  $\cO(\log n)$ & $\cO(m)$ &  minimal &  for $T = \Omega(n)$  \\ 
\hline 
\RB \LB
Biely et. al~\cite{BielyRSSW18}, alg. $2$ & $2D + 2E$ & $O(m\log{n})$ & $\cO(m)$ & $D$ known & directed links  \\ 
\hline
\RB \LB
Biely et. al~\cite{BielyRSSW18}, alg. $4$ & $3D + 3$ & $\cO(nD\log n)$ & $\cO(m)$ & $D$ known & k-set agreement \\
\hline
\RB \LB
Kuhn et. al~\cite{KuhnOM11} & $O(n)$ & $\cO(m^2\log n)$ & $\cO(m)$ & minimal & $\Delta$-coordinated  
\end{tabular}
\end{center}
\caption{\label{table:previous-work}
A summary of the previous consensus results that are most relevant to the model studied in this paper and their respective performance bounds.
Notation~$n$ denotes the number of nodes and $m$ the number of links in an initial network; none of these numbers are ever assumed to be known. 
Letter~$T$ denotes a number such that in every interval of $T$ rounds there exists a stable connected spanning subgraph, see~\cite{KuhnLO10}. 
Letter~$D$ denotes a dynamic source diameter, and $E$ denotes a dynamic graph depth, see~\cite{BielyRSSW18}. 
The stretch  and parameters $D$ and $E$ in bidirectional networks with unreliable links are related as follows: $D = E = \Lambda \ge \lambda$.
Parameter $\Delta$ is an upper bound on the difference in the respective times of termination for any two nodes, see~\cite{KuhnOM11}.
}
\end{table}

The approach to failing links that we work with falls within the scope of modeling dynamic evolution of networks.
Next, we discuss previous work on solving consensus in networks undergoing topology changes, malfunctioning links and transmission failures. 
Perry and Toueg~\cite{PerryT86}, Santoro and Widmayer~\cite{SantoroW89,SantoroW07},  and Charron{-}Bost and Schiper~\cite{Charron-BostS09} studied agreement problems in complete networks in the presence of dynamic transmission failures.
Kuhn et al.~\cite{KuhnOM11} considered $\Delta$-coordinated binary consensus in undirected graphs, whose topology could change arbitrarily from round to round, as long it stayed connected; here $\Delta$ is a parameter that bounds from above the difference in times of termination for any two nodes. 
Paper~\cite{KuhnOM11} showed how to solve $\Delta$-coordinated binary consensus in  $\cO(\frac{nD}{D+\Delta} + \Delta)$ rounds using message of $\cO(m^2\log{n})$ size  without a prior knowledge of the network's diameter~$D$. 
Augustine et al.~\cite{AugustinePRU12} studied dynamic networks with adversarial churn and gave randomized algorithm to reach almost-everywhere agreement with high probability  in poly-log\-a\-rith\-mic time.
Charron-Bost et al.~\cite{Charron-BostFN15} considered approximate consensus in dynamic networks and provided connectivity restrictions on network evolution  that make approximate consensus solvable. 
Coulouma et al.~\cite{CouloumaGP15} characterized oblivious message adversaries for which consensus is solvable.
Biely et al.~\cite{BielyRSSW18} considered reaching agreement and $k$-set agreement in networks when communication is modeled by  directed-graph topologies controlled by adversaries, with the goal to identify constraints on adversaries to make the considered problems solvable.
Paper~\cite{BielyRSSW18} solved $k$-set agreement in time $\cO(3D+H)$ and using messages of $\cO(nD\log{n})$ size, where $D$ denotes the dynamic source diameter and $H$ denotes the dynamic graph depth, and the code of algorithm includes~$D$. 
Kuhn et al.~\cite{KuhnLO10} considered dynamic networks in which the network topology changes from round to round such that  in every $T\ge 1$ consecutive rounds there exists a stable connected spanning subgraph, where $T$ is a parameter. 
Paper~ \cite{KuhnLO10} gave an algorithm that implements any computable function of the initial inputs, working in $\cO(n + n^2 / T)$ time with messages of $\cO(\log{n} + d)$ size, where $d$ denotes the size of input values.
Biely et al.~\cite{BielySW11} studied consensus in a synchronous model combining transient process and communication failures, with the numbers of such failures explicitly bounded, and demonstrated adaptability of popular consensus solutions to this model.
Schmid et al.~\cite{SchmidWK09} showed impossibility results and lower bounds for the number of processes and rounds for synchronous agreement under transient link failures.
Winkler et al.~\cite{WinklerSS19} investigated solving consensus in dynamic networks with links controlled by adversaries who  only eventually provide a desired behavior of networks.
A representative selection of previous results related to the model of this paper is given in Table~\ref{table:previous-work}.

Next, we briefly discuss advances on other algorithmic problems in dynamic networks.
Cornejo et al.~\cite{CornejoGN12} considered the aggregation problem in dynamic networks, where the goal is to collect  tokens, originally distributed among the nodes, at a minimum number of nodes.
Haeupler and Kuhn~\cite{HaeuplerK12} developed lower bounds on information dissemination in adversarial dynamic networks.
Sarma et al.~\cite{SarmaMP12} investigated algorithms interpreted as random walks in dynamic networks.
Michail et al.~\cite{MichailCS13} studied naming and counting in anonymous dynamic networks.
Michail et al.~\cite{MichailCS14} studied propagation of influence in dynamic networks  that may be disconnected at any round.
Frameworks and models of distributed computing in networks evolving in time were discussed in~\cite{CasteigtsFQS12, KuhnLO10,KuhnO11, Michail16}.

\Paragraph{The previous work on the scalability of consensus solutions.}

We review the previous  work on consensus solutions that scale well with their performance metrics, including communication and running time.
Let the letter~$t$ denote an upper bound on the number of node failures that is known to all nodes, and the letter~$f$ be an actual number of failures occurring in an execution, which is not assumed to be known.
We may call a consensus solution ``fast'' if it operates in time $\cO(t+1)$ and ``early stopping'' if its running time is $\cO(f+1)$.
Scaling communication performance can be understood either ``globally,'' which conservatively means $\cO(n\text{ polylog } n)$ total communication, or ``locally,'' which conservatively means $\cO(\text{polylog }n)$ communication generated by each communicating agent.

Networks of degrees uniformly bounded by a constant provide an ultimate local scaling of communication.
Upfal~\cite{Upfal94} showed that an almost-everywhere agreement can be solved in networks of constant degree with a linear number of crashes allowed to occur.
Dolev et al.~\cite{DolevRS90} gave an early stopping solution for consensus with arbitrary process failures and a lower bound $\min\{ t+1,f+2\}$ on the number of rounds. 
Berman et al.~\cite{BermanGP92} developed an early-stopping consensus solution with arbitrary process failures that was simultaneously optimal with respect to time performance $\min\{ t+1,f+2\}$ and the number of processes $n>3t$.
Galil et al.~\cite{GalilMY95} developed a crash-resilient consensus algorithm using $\cO(n)$ messages, thus showing that this number of messages is optimal, but their algorithm runs in over-linear time $\cO(n^{1+\varepsilon})$, for any $0<\varepsilon<1$; they also gave an early-stopping algorithm with $\cO(n+fn^\varepsilon)$ message complexity, for any $0<\varepsilon<1$.
Chlebus et al.~\cite{ChlebusKO-PODC23} presented a binary consensus algorithm operating in $\cO(t+\log n)$ rounds and sending $\cO(n + t\log t)$ messages for $t<\frac{n}{5}$; the algorithm sends the optimum
number of bits/messages $\cO(n)$ in the single-port model, as long as $t = \cO( n/\log n )$. 
Garay and Moses~\cite{GarayM98} presented a fast consensus solution for arbitrary processor faults that runs on the optimum $n>3t$ number of processors while using the amount of communication that is polynomial in~$n$; they also gave an early-stopping variant of this algorithm. 
Chlebus and Kowalski~\cite{ChlebusK-JCSS06} developed a gossiping algorithm coping with crashes and applied it to develop a consensus solution that is fast, by running in $\cO(t+1)$ time, while sending $\cO(n \log^2 t)$ messages, provided that $n-t=\Omega(n)$.
Chlebus and Kowalski \cite{ChlebusK-DISC06} developed a deterministic early-stopping  algorithm that scales communication globally by sending $\cO(n \log^5 n)$ messages.
Coan~\cite{Coan93} gave an early-stopping consensus solution that uses messages of size logarithmic in the range of input values; see also Bar-Noy et al.~\cite{Bar-NoyDDS92} and Berman et al.~\cite{BermanGP92}.
Chlebus at al.~\cite{ChlebusKS-PODC09} gave a fast deterministic  algorithm for consensus which has nodes send $\cO(n\log^4 n)$ bits, and showed that no deterministic  algorithm can be locally scalable with respect to message complexity.
Chlebus and Kowalski~\cite{ChlebusK-SPAA09} gave a randomized consensus solution terminating in the expected $\cO(\log n)$ time, while  the expected number of bits that each process sends and receives against oblivious adversaries is~$\cO(\log n)$, assuming that a bound~$t$ on the number of crashes is a constant fraction of the number~$n$ of nodes.  
Chlebus et al.~\cite{ChlebusKS-DISC10} gave  a scalable quantum algorithm to solve binary consensus, in a system of $n$  crash-prone quantum processes, which works in $\cO(\text{polylog }n)$ time  sending $\cO(n \text{ polylog } n)$ qubits against the adaptive adversary.
Dolev and Lenzen~\cite{DolevL13} showed that any crash-resilient consensus algorithm deciding in $f + 1$ rounds has worst-case message complexity $\Omega(n^2f)$.
Alistarh et al.~\cite{AlistarhAKS18} developed a randomized algorithm for asynchronous message passing that scales well to the number of crashes, in that it sends the expected $\cO(n t + t^2 \log^2 t)$ messages.

\Paragraph{A general perspective.}

Algorithmic problems concerning reaching agreement are central to the study of distributed systems and communication networks.
The probem of consensus was introduced by Pease et al.~\cite{PeaseSL80} and Lamport et al.~\cite{LamportSP82}. 
Books by Attiya and Welch~\cite{Attiya-Welch-book2004}, Herlihy and Shavit~\cite{HerlihyShavit-book}, Lynch~\cite{Lynch-book96}, and Raynal~\cite{Raynal2010-Synthesis} provide general expositions of formal models of distributed systems as frameworks to develop distributed algorithms in systems prone to failures of components. 
Attiya and Ellen~\cite{Attiya-Ellen-book-2014} and Herlihy et al.~\cite{HerlihyKozlovRajsbaum-book} present techniques to show  impossibilities and lower bounds for problems in distributed computing.

\section{Preliminaries}

\label{sec:preliminaries}

We model distributed systems as collections of  nodes that communicate through a wired communication network.
Executions of distributed algorithms are synchronous, in that they are partitioned into global rounds coordinated across the whole network.
There are $n$ nodes in a network. 
Each node has a unique \emph{name} used to determine its identity; a name can be encoded by  $\cO(\log n)$ bits.
Nodes start with a read-only variable \texttt{name} initialized to their names; the private copy of this variable at a node~$p$ is denoted by \texttt{name}$_p$.
Every node identified other nodes by their names.

A node has distinctly labeled ports that act as interfaces to links connecting the node to its neighbors.
We say that a node \emph{knows neighbors} if it can associate with each port the name of a node that receives communication sent through this port and whose communication is received through this port.
Links connecting pairs of nodes serve as bi-directional  communication channels.
Messages are scaled to channel capacity, in that precisely one message can be transmitted at a round through a link in each direction. 
The size of a message denotes the number of bits used to encode it.
A message transmitted at a round is delivered within the same round.
If at least one message is transmitted by a link in an execution then this link is \emph{used} and otherwise it is \emph{unused} in this execution.

A link may fail to deliver a message transmitted through it at a round; once such omission happens for a link, it is considered \emph{unreliable}.
The functionality of an unreliable  link is unpredictable, in that it may either deliver a transmitted message or fail to do it.
A link that has never failed to deliver a message by a given round is \emph{reliable} at this round.
A path in the network is \emph{reliable} at a round if it consists only of links that are reliable at this round. 

Nodes and links of a network can be interpreted as a simple graph, with nodes serving as vertices and links as undirected edges.
A network at the start of an execution is represented by some \emph{initial graph~$G$}, which is simple and connected.
An edge representing an unreliable link is removed from the  graph $G$ at the first round it fails to deliver a transmitted message. 
A graph representing the network evolves through a sequence of its sub-graphs and may become partitioned into multiple connected components.
Once an algorithm's execution halts, we stop this evolution of the initial graph~$G$.
An evolving  network, and its graph representation~$G$, at the first round after all the nodes have halted in an execution is called \emph{final} for this execution and denoted by~$G_F$.

Nodes start an execution of a distributed algorithm simultaneously.
A node performes the following actions at a round: it sends messages through some of its ports,  collects messages sent by neighbors, and performs local computation.
A node can send messages to any subset of its neighbors and collect messages coming from all neighbors  at a round.

\Paragraph{Disconnected agreement.}

We precisely define the algorithmic problem of interest as follows.
Each node~$p$ starts with an initial value \texttt{input}$_p$.
We assume two properties of such input values.
One is that an input value can be represented by $\cO(\log n)$ bits.
The other is that input values can be compared, in the sense of belonging to a domain with a total order.
In particular, finitely many initial input values contain a maximum one. 
We say that a node \emph{decides} when it produces an output by seting a dedicated variable to a decision value.
The operation of deciding is irrevocable.
An algorithm \emph{solves disconnected agreement} in networks with links prone to failures if the following three properties hold in all executions:
\begin{description}
\item[\sf Termination:] every node eventually decides.

\item[\sf Validity:] each decision value is among the input values.

\item[\sf Agreement:]
when a node~$p$ decides then its decision value is the same as these of the nodes that have already decided and to which $p$ is connected by a reliable path at the round  of deciding.
\end{description}
An equivalent formulation of the agreement property is to say that the decisions of all the nodes in a connected component of $G_F$ are equal. 
This is because if a node~$p$ decides at a round and $q$ is a node that has already decided to which $p$ is still connected by a reliable path at this round then if that path stays reliable until halting then $p$ and $q$ end up in the same connected component of~$G_F$.

\Paragraph{The size of messages.}

If a message sent by a node executing a disconnected agreement solution carries a constant number of node names and a constant number of  input values then the size of such a message is $\cO(\log n)$ bits, due to our assumptions about encoding names and input values.
Messages of $\cO(\log n)$ bits are called \emph{short}.
If a message carries $\cO(n)$ node names and $\cO(n)$ input values then the size of such a message is $\cO(n \log n)$ bits.
We call  messages of $\cO(n \log n)$ bits \emph{linear}.

\Paragraph{Stretch.}

Let $H$ be a simple graph.
If $H$ is connected then $\text{diam}(H)$ denotes the diameter of~$H$.
Suppose $H$ has $k$ connected components $C_1,\ldots, C_k$, where $k\ge 1$, and let $d_i=\text{diam}(C_i)$ be the diameter of component~$C_i$.
The \emph{stretch of~$H$} is defined as a number $k-1+\sum_{i=1}^k d_i$.
The stretch of a connected graph equals its diameter, because then $k=1$.
The stretch of $H$ can be interpreted as the maximum diameter of a graph obtained from $H$ by adding $k-1$ edges such that the obtained graph is connected.
The maximum stretch of a graph with $n$ vertices is $n-1$, which occurs when every vertex is isolated or, more generally, when each connected component is a line of nodes.

\Paragraph{Knowledge.}

A property of distributed communication environments or executions is \emph{known} if it can be used in codes of algorithms. 
We say that an algorithm relies on \emph{minimal knowledge} if each node knows  its unique name and can identify a port through which a message arrives and can assign a port for a message to be transmitted through.
The number of nodes in a network $n$ is never assumed to be known in this paper.

Neighbors can be discovered at one round of communication by all nodes sending their names to the neighbors: incoming messages allow to assign the sender's name to a port.
This operation requires transmitting through every link in the network.
If an algorithmic goal includes minimizing the number of used links then we assume that each node knows its neighbors prior to the beginning of an execution, in having the neighbors correctly mapped on  ports.

The notation $\Lambda$ will denote an upper bound on the stretch of a communication network  in the course of an execution.
If $\Lambda$ is used then this means that $\Lambda$ is known to all nodes.

\Paragraph{Performance metrics.}

An initially connected graph $G$ evolves through a nested sequence of its original sub-graphs, by removing edges representing faulty links.
We want to assess how the running time an algorithm scales to the sub-networks resulting by removing faulty links.
Stretch is used as a parameter capturing the challenge of solving disconnected agreement in sub-networks.
The challenge in using stretch is that it evolves as a sequence of numbers.
Let the notation $\lambda_k$ mean a stretch of a network $G$ at a round~$k$ of an execution, and $\lambda$ be a stretch at the round of halting.

\begin{proposition}
\label{pro:stretch}

Stretches $\lambda_k$ make a monotonously increasing sequence: $\lambda_i\le \lambda_j$, for $i < j$.
\end{proposition}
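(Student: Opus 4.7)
The plan is to show the one-step claim: removing a single edge from a graph $H$ cannot decrease its stretch. Since the network evolves by removing edges (corresponding to links that become unreliable), iterating this fact round by round yields the proposition. So let $H'$ be obtained from $H$ by deleting one edge $e=\{u,v\}$, and let $C$ be the connected component of $H$ containing $e$.

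I would split into two cases. First, suppose $e$ is not a bridge of $C$. Then $H$ and $H'$ have the same vertex set and the same list of connected components; only the diameter of $C$ can change. Since removing an edge from a connected graph can only increase pairwise distances (every path in $H'$ is a path in $H$), we get $\text{diam}(C \setminus e) \ge \text{diam}(C)$, while all other components are unaffected, so the stretch can only grow.

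Second, suppose $e$ is a bridge of $C$. Then deleting $e$ splits $C$ into two components $C'\ni u$ and $C''\ni v$. The number of components $k$ increases by exactly $1$, contributing $+1$ to the stretch formula, while the term $\text{diam}(C)$ is replaced by $\text{diam}(C')+\text{diam}(C'')$. The key inequality to establish is
\[
\text{diam}(C')+\text{diam}(C'')+1 \;\ge\; \text{diam}(C).
\]
To see this, pick $a,b\in C$ with $\text{dist}_C(a,b)=\text{diam}(C)$ and look at a shortest $a$-$b$ path $P$ in $C$. If $P$ avoids $e$ then $a,b$ lie in the same side, say $C'$, and $\text{dist}_{C'}(a,b)=\text{dist}_C(a,b)=\text{diam}(C)$, so $\text{diam}(C')\ge \text{diam}(C)$ and the inequality follows. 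Otherwise $P$ traverses $e$, so (WLOG) $a\in C'$, $b\in C''$ and
\[
\text{diam}(C)=\text{dist}_{C'}(a,u)+1+\text{dist}_{C''}(v,b)\le \text{diam}(C')+1+\text{diam}(C''),
\]
again giving the required inequality. Combining the two cases shows stretch is monotone non-decreasing under a single edge removal.

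The main (small) obstacle is precisely this bridge inequality; once it is in hand the argument is a clean case split plus induction on the sequence of edge deletions that transforms the network between round $i$ and round $j$. I do not expect any complication from disconnected initial components, as the definition $k-1+\sum_i d_i$ handles them additively and the deletion argument is local to one component at a time.
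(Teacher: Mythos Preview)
Your proof is correct and follows essentially the same approach as the paper: reduce to a single edge removal, split into the non-bridge and bridge cases, and in the bridge case use the inequality $\text{diam}(C)\le \text{diam}(C')+\text{diam}(C'')+1$. The only difference is that you supply an explicit argument for this inequality via a diametral path, whereas the paper simply asserts it.
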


\begin{proof}
Consider an edge $e$ belonging to a connected component~$C$.
If $e$ gets removed from the graph and the connected component $C$ stays intact then its diameter may only increase.
Suppose the edge~$e$ is a bridge of $C$, and after its removal the connected component~$C$ breaks into two connected components $C_1$ and~$C_2$.
The stretch of the graph after removal of the edge~$e$ is the stretch before removal of this edge minus the diameter of $C$ plus the diameters of $C_1$ and $C_2$, and this number incremented by~$1$.
So this new stretch cannot be smaller than the original one, because the diameter of $C$ is at most a sum of the diameters of $C_1$ and $C_2$ incremented by one.
\end{proof}

We consider bounds on performance of algorithms with respect to the stretch~$\lambda$ occurring at a  round in which all nodes halt.
Let $f:\mN\rightarrow \mN$ be a monotonously increasing function.
For a communication algorithm to run in $\cO(f(\lambda))$ time, it is sufficient and necessary that the running time is $\cO(f(\lambda_k))$, for any round number $k$ prior to halting, by Proposition~\ref{pro:stretch}.

A disconnected-agreement algorithm in a synchronous network with links prone to failures is \emph{early stopping}  if it runs in a number of rounds proportional to the unknown stretch~$\lambda$ actually occurring. Such an algorithm is \emph{fast}  if it runs in  a number of rounds proportional to an upper  bound on stretch~$\Lambda$ known to all the nodes.

Our use of terms ``fast'' and ``early stopping'' follows a similar approach to consensus algorithms with node crashes. 
Let $t$ denote an upper bound on the number of node crashes, which is known to all nodes, and $f$ be an actual number of node crashes, which is not assumed to be known.
Fast algorithms operate in running time $\cO(t+1)$ and early stopping algorithms operate in running time $\cO(f+1)$.
We use $\Lambda$ similarly as $t$ and $\lambda$ similarly as~$f$.

\section{Fast Agreement}

\label{sec:fast-agreement}

We demonstrate the relevance of stretch to the running time of algorithms solving disconnected agreement.
First, we show that a known upper bound on stretch allows to structure a disconnected agreement solution such that it operates in a number of rounds equal to the bound on stretch, so it is fast in our terminology.
Second, we show that stretch is a meaningful yardstick to measure such  running times, in that each  algorithm solving disconnected algorithm may have to be executed  for a number of rounds at least as large as the stretch.

If $\Lambda$ is an upper bound on the stretch known to all nodes, then the following could be a possible approach to reach disconnected agreement in networks prone to link failures.
Each node maintains a \emph{candidate value} in a dedicated variable, which it initializes to the input.
For $\Lambda$ rounds, each node does two things per round: (1) sends the current candidate value to all neighbors, (2) updates the candidate value to the maximum of the current value and the values just received in messages.
This algorithm is incorrect, which can be demonstrated on an example as follows.
Let a network consist of three nodes: node~$p_1$ with input~$1$, node~$p_2$ with input~$2$, and node~$p_3$ with input~$3$, all connected into a line $p_1-p_2-p_3$.
Suppose the link $p_1-p_2$ is non-faulty and $p_2-p_3$ is faulty.
The stretch is $2$, as the stretch of a network starting a line of $n$ nodes is always~$n-1$.
In the course of an execution, let the link $p_2-p_3$ not deliver messages in the first round and deliver messages in the second round. 
The nodes $p_1$ and $p_2$ are in the same connected component $p_1-p_2$, but $p_1$ decides on~$2$ while $p_2$ decides on~$3$, which violates the required agreement property.

We present next a fast algorithm solving disconnected agreement, assuming that a bound~$\Lambda$ on stretch is known to all nodes.
The algorithm is called \textsc{Fast-Agreement}; its pseudocode is given in Figure~\ref{fig:fast-agreement}.
Every node strives to eventually decide on the largest value possible, so a node remembers,  in a variable \texttt{candidate}, only the largest value it has witnessed so far.
This variable is initialized to the input value.
At each round, a node either sends its \texttt{candidate} value to all neighbors or pauses in sending messages, then receives all incoming messages from the neighbors, if any, and updates \texttt{candidate} to the maximum value received in messages is some are greater than the current value.
A node sends messages to neighbors at a round if either this is the first round, so \texttt{candidate} is the input value, or  if \texttt{candidate} was updated to a new value in the previous round.
We have that if a node sends a particular value to its neighbors then it does so only once.
All nodes halt after $\Lambda$ rounds, and every node decides on the \texttt{candidate} value, which is the largest value it has learned about.


\begin{figure}[t]

\hrule

\FF

\noindent
\texttt{algorithm} \textsc{Fast-Agreement} ($\Lambda$)

\FF

\hrule

\FF

\begin{enumerate}[nosep]
\item
initialize $\texttt{candidate} \leftarrow \texttt{input}_p$

\item
\texttt{repeat} $\Lambda$  \texttt{times}
\begin{itemize}[nosep]
\item[] 
\texttt{if} the current value of \texttt{candidate} has not been sent before \texttt{then} 
\begin{itemize}[nosep]
\item[]
send  \texttt{candidate} to all neighbors 
\end{itemize}
\item[] 
receive messages from all neighbors 

\item[] 
\texttt{if} a value greater than \texttt{candidate} has just been received 
\begin{itemize}[nosep]
\item[] 
\texttt{then} set \texttt{candidate} to the maximum value just received 
\end{itemize}
\end{itemize}
\item
decide on \texttt{candidate}
\end{enumerate}

\FF

\hrule

\FF

\caption{\label{fig:fast-agreement}
A pseudocode of algorithm \textsc{Fast-Agreement} for a node~$p$.
The parameter $\Lambda$ represents an upper bound on stretches, which is known to all nodes.}
\end{figure}

In analyzing correctness of the algorithm, we asses when values of some magnitude  carried in messages reach connected components of the network~$G_F$.
Let $C$ be a connected component of graph~$G_F$.
We say that \emph{value~$v$ reaches~$C$} at a round if a node in $C$ has its \texttt{candidate} value at least as large as~$v$ by this round.
Graph $G_F-C$ denotes $G_F$ with the nodes in $C$ removed along with their incident edges.

\begin{lemma}
\label{lem:chase}

If a value reaches a connected component $C$ of network $G_F$ then this occurs by the round equal to the stretch of $G_F-C$ plus~$1$.
\end{lemma}

\begin{proof}
The argument is by induction on the number of connected components of $G_F$.
The base case occurs if $G_F$ is connected.
Then $G_F=C$, so $G_F-C$ is an empty graph of stretch~$0$.
Any value that reaches $G_F$ is at least as large as the input of some node, and inputs are available by round~$1$.
Next comes the inductive step.
Let $C$ be a connected component of~$G_F$, and assume that the inductive hypothesis holds for~$G_F-C$.
If a value~$v$ has reached $C$ by the first round then we are done.
Otherwise, a value at least as large as~$v$ has been brought in a message sent by a node~$p_1$ to another node~$p_2$, where $p_1$ is in~$G_F-C$ and $p_2$ is in~$C$.
Let $C'$ be the connected component of $p_1$ in $G_F-C$. 
By the inductive assumption, $v$ reached $C'$ by a round $r$ equal to the stretch of $G_F-C-C'$ plus~$1$.
After round $r$, the value traveled through $C'$ for a number of rounds at most equal to the diameter of~$C'$, because a node transmits a value only in the round immediately following the round in which the value is received and becomes the new candidate value.
After the journey through $C'$, it takes one round to tranmit a value as large as~$v$ from~$p_1$ to~$p_2$, for a total number of rounds as large as the stretch of~$G_F-C$ plus~$1$.
This completes the inductive step, and so the argument by induction.
\end{proof}


\begin{theorem}
\label{thm:fast-agreement}

Consider an execution  of algorithm \textsc{Fast-Agreement}\,$(\Lambda)$ in a network.
If the stretch  of the network never gets greater than $\Lambda$ then the algorithm solves disconnected agreement in $\Lambda$ rounds using messages of $\cO(\log n)$ bits.
\end{theorem}

\begin{proof} 
The pseudocode in Figure~\ref{fig:fast-agreement} is structured as a loop of $\Lambda$ iterations, each iteration taking one round. 
All nodes decide at the end, which gives the termination and time performance.

Each node~$p$ initializes its private variable \texttt{candidate}$_p$  to the input value \texttt{input}$_p$.
An iteration of the repeat loop maintains an invariant that the variables \texttt{candidate} store only values taken from among the original input values, because only the values of this variable are sent and received.
Every message carries only a candidate value.
By the assumption on properties of input values, each such a value can be encoded with $\cO(\log n)$ bits.
Ultimately, each node~$p$ decides on a value in its \texttt{candidate}$_p$ variable, which gives validity.

Next, we show agreement.
Let us consider a connected component~$C$ of~$G_F$.
Let $v$ be the maximum value that has ever reached~$C$.
By Lemma~\ref{lem:chase}, this value has reached $C$ by the round equal to the stretch of~$G_F-C$ plus~$1$.
Once $v$ arrives at a node in~$C$, this connected component~$C$ is flooded with value~$v$, by design of the algorithm.
This takes time equal to the diameter of~$C$.
It follows that flooding of $C$ with $v$ is completed by the round equal to the stretch of~$G_F$.
This is the round all nodes decide, so all the nodes in $C$ decide on~$v$.
\end{proof}

\Paragraph{A lower bound on running time.}

We show a lower bound on the running time required to solve disconnected agreement in networks with link failures.
This lower bound equals the stretch of the final graph of the network.
We consider specific network topologies with the property that with no link failures the time needed to reach agreement is at least a diameter of the network.
If a final graph is connected, then the stretch is the same as the diameter.


\begin{figure}[t]
\begin{center}
\includegraphics[width=0.44\textwidth]{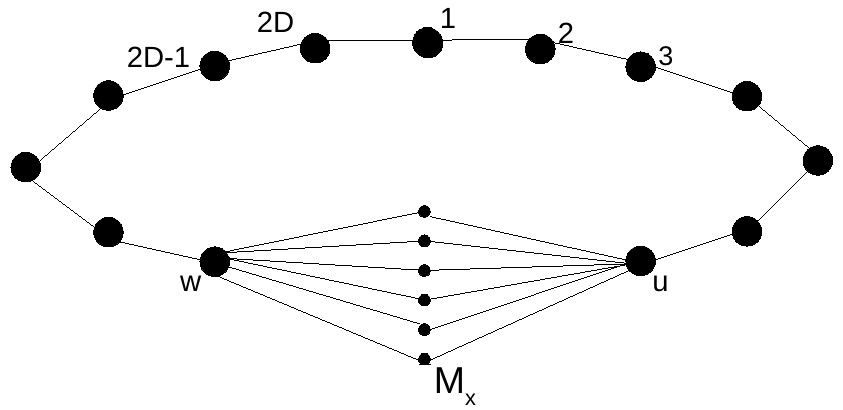}

\caption{\label{fig:kontr1} 
A depiction of graph $G(x,D)$ used in the proof of Lemma~\ref{lem:graph-diameter-no-failures}.}
\end{center}
\end{figure}


\begin{lemma}
\label{lem:graph-diameter-no-failures}

For any algorithm $\cA$ solving disconnected agreement in networks prone to link failures, and for positive integers $D$ and $n\ge 2D$, there exists a network $G$ with $n$ nodes and with diameter~$D$ such that some execution of $\cA$ on $G$ takes at least $D$ rounds with no link failures.
\end{lemma}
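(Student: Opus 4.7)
My plan is to exhibit a single network $G$ on $n$ vertices of diameter exactly $D$ that is self-centered---every vertex has eccentricity $D$---and then derive a contradiction from the assumption that $\cA$ ever halts on $G$ in fewer than $D$ rounds. For the construction I blow up the even cycle $C_{2D}$: partition the $n$ vertices into $2D$ nonempty groups $H_0,\dots,H_{2D-1}$, which is possible precisely because $n \ge 2D$; inside each $H_i$ place a clique, and between $H_i$ and $H_{i+1}$ (indices mod $2D$) place a complete bipartite graph. A direct check shows that for $x \in H_i$, $y \in H_j$ with $i \neq j$ one has $d_G(x,y) = \min(|i-j|, 2D-|i-j|)$, so $\mathrm{diam}(G) = D$ and every vertex has eccentricity $D$, realised at any vertex in the antipodal group $H_{i+D}$.

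Next I assume for contradiction that on $G$ algorithm $\cA$ always halts in at most $k < D$ rounds on every input. In every fault-free execution the graph stays connected, so agreement forces a common decision value $\cD(\alpha)$ depending on the input assignment $\alpha$, and validity forces $\cD(\alpha) \in \{\alpha(v) : v \in V(G)\}$. Since $\cA$ is deterministic and messages propagate exactly one hop per round, the end-of-round-$k$ state of a vertex $u$ is a fixed function of the inputs in the closed ball of radius $k$ around $u$. For any vertex $v$ I use self-centeredness to produce an antipodal $u$ with $d_G(u,v)=D>k$, so $v$ lies outside the ball around $u$; hence $u$'s decision, and with it $\cD(\alpha)$ by agreement, does not depend on $\alpha(v)$. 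Letting $v$ range over all of $V(G)$ shows $\cD$ is a constant $c$, and any $\alpha$ whose inputs are all distinct from $c$---available because the input domain has at least two elements---violates validity.

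The main obstacle is pinning down the self-centered graph. The classical indistinguishability chain on a generic diameter-$D$ graph only yields the weaker bound $\lceil D/2 \rceil$, and lifting it to $D$ requires the global property that every vertex has some other vertex at the full distance $D$ from it. The cycle blow-up supplies this property uniformly for every $n \ge 2D$, which is exactly the size regime of the lemma's hypothesis.
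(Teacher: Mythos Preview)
Your proof is correct. Both you and the paper build a self-centered diameter-$D$ graph on $n \ge 2D$ vertices by fattening the even cycle $C_{2D}$: the paper replaces a single vertex of $C_{2D}$ by $x = n - 2D + 1$ parallel copies (each attached to the same two neighbours), while you blow up every vertex into a clique-group with complete bipartite joins between cyclically adjacent groups. Either construction gives every vertex eccentricity exactly~$D$, which is the structural fact both arguments exploit.

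The indistinguishability step is organised differently. The paper runs the classical valency chain: it walks through binary configurations $C_0,\ldots,C_n$ flipping one input from $0$ to $1$ at a time, locates a pivot vertex $p$ where the global decision switches, and then observes that the antipodal vertex $q$ at distance $D$ from $p$ cannot distinguish the two sides of the pivot in fewer than $D$ rounds. You bypass the chain and argue directly that, under a sub-$D$ halting bound, the common decision is independent of each single input coordinate (via that coordinate's antipodal witness) and hence globally constant, contradicting validity. Your route is a bit slicker and makes the role of self-centeredness explicit; the paper's chain is the more traditional presentation and has the minor advantage of exhibiting two concrete input assignments on which $D$ rounds are provably required. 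Both extract the same bound from the same underlying light-cone fact.
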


\begin{proof}
We define graphs $G(x,D)$, where $x>0$ is an integer.
Start with a cycle  of length $2D$, and  some three consecutive vertices $u$, $v$, and~$w$ on the cycle, where $v$ is connected by edges with $u$ and~$w$.
Replace vertex $v$ by $x$ copies of $v$, each connected precisely to the neighbors $u$ and $w$ of~$v$.   
This construction is depicted in Figure~\ref{fig:kontr1}, in which $M_{x}$ denotes $x$ copies of some vertex~$v$.
Observe that $G(x,D)$ has $2D+x-1$ vertices and diameter~$D$.
Take $G=G(x,D)$ such that  $n=2D+x-1$.

We consider executions of algorithm $\cA$ on a network modeled by~$G$ in which no link ever fails.
Each input value of a node in the network will be either $0$ or~$1$.
Start with the initial configuration~$C_0$ in which all input values are $0$, so decision is on~$0$ by validity.
We proceed through a sequence of initial configurations $C_0, C_1, C_2,\ldots, C_n$, such that $C_i$ has $i$ nodes starting with the input value~$1$ and $n-i$ nodes starting with the input value~$0$.
In particular, in configuration~$C_n$ all the nodes start with the input value~$1$, so decision has to be on~$1$,  by validity.
There exist two configurations $C_i$ and $C_{i+1}$ such that for~$C_i$ the decision is on~$0$ and for~$C_{i+1}$ the decision is on~$1$.
These two configurations differ only in some vertex $p$ having input~$0$ in $C_i$ and input $1$ in $C_{i+1}$.

Consider two executions of algorithm $\cA$, one starting in configuration~$C_i$ and the other in~$C_{i+1}$.
Each node~$q$  of distance~$D$ from~$p$ sends and receives the same messages in the first $D-1$ rounds in both executions. 
Take as $q$ a node that is of distance $D$ from $p$; such a node~$q$ exists by the specification of graph~$G$.
Node~$q$ needs to wait with deciding until at least the round~$D$, because its state transitions in the first $D-1$ rounds are the same in both executions, while the decisions are different.
\end{proof}

\newpage


\begin{corollary}
\label{cor:n-half}

For any algorithm $\cA$ solving disconnected agreement in networks prone to link failures, and for any even positive integer~$n$, there exists a network $G$ with $n$ nodes  such that some execution of $\cA$ on $G$ takes at least $\frac{n}{2}$ rounds with no link failures.
\end{corollary}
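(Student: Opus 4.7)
The plan is to obtain this corollary as a direct instantiation of Lemma~\ref{lem:graph-diameter-no-failures}. The lemma asserts, for every pair of positive integers $D$ and $n$ with $n \ge 2D$, the existence of an $n$-node network of diameter~$D$ on which some failure-free execution of $\cA$ requires at least $D$ rounds. My target is a bound of $n/2$ rounds for an even $n$, so the natural choice is to set $D = n/2$.

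First I would check that the parameter choice is admissible: if $n$ is an even positive integer, then $D = n/2$ is a positive integer, and the hypothesis $n \ge 2D$ is satisfied with equality. Thus the lemma applies and supplies a network $G$ on $n$ nodes, of diameter exactly $n/2$, together with an execution of $\cA$ on $G$ in which no link fails and which requires at least $D = n/2$ rounds. This is precisely the statement to be proved.

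There is no real obstacle here; the corollary is essentially a restatement of the lemma at the boundary case $n = 2D$. The only point worth flagging in the write-up is the observation that parity of $n$ ensures $D = n/2$ is an integer so that the graph $G(x,D)$ from the proof of Lemma~\ref{lem:graph-diameter-no-failures} (with $x = n - 2D + 1 = 1$) is well defined.
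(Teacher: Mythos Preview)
Your proposal is correct and matches the paper's own proof essentially verbatim: both instantiate Lemma~\ref{lem:graph-diameter-no-failures} at the boundary $n=2D$, i.e., $D=n/2$ and $x=1$, and read off the $n/2$-round lower bound directly.
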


\begin{proof}
We use a network modeled by graph $G(x,D)$ used in the proof of Lemma~\ref{lem:graph-diameter-no-failures}, in which $x=1$ and $n=2D$.
We have that $D=\frac{n}{2}$ rounds are necessary to reach agreement in some executions of algorithm~$\cA$, by Lemma~\ref{lem:graph-diameter-no-failures}. 
\end{proof}


\begin{theorem}
\label{thm:lower-bound-stretch}

For any natural number $\lambda$ and an algorithm $\cA$ solving disconnected agreement in networks prone to link failures, there exists a network $G$ that has stretch $\lambda$ and such that each execution of $\cA$ on $G$ takes at least $\lambda$ rounds.
\end{theorem}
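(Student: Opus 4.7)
The plan is to derive this theorem as an almost immediate consequence of Lemma~\ref{lem:graph-diameter-no-failures}, by exploiting the fact that the stretch of a connected graph equals its diameter. The only thing that needs to happen is to translate a network with diameter~$\lambda$ into one whose stretch in the final graph is~$\lambda$, and the cleanest way to do that is to pick an execution in which no link ever fails, so that the final graph coincides with the initial graph.

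Concretely, for $\lambda = 0$ the claim is vacuous, so assume $\lambda \ge 1$. I would invoke Lemma~\ref{lem:graph-diameter-no-failures} with $D := \lambda$ and any $n \ge 2\lambda$ (for instance $n = 2\lambda$, obtained from the construction $G(1,\lambda)$ in the lemma's proof). The lemma produces a network $G$ on $n$ nodes with diameter $\lambda$, together with two initial configurations $C_i$ and $C_{i+1}$ differing in the input of a single vertex $p$ that force different decisions, and a witness vertex $q$ at distance $\lambda$ from $p$. On both configurations, the execution in which no link ever fails must take at least $\lambda$ rounds at node $q$, by the usual indistinguishability argument recalled in the lemma.

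Fix such a failure-free execution. Because no link ever fails, the final graph $G_F$ equals the initial graph $G$, so $G_F$ is connected with diameter~$\lambda$; hence, by the definition of stretch (the $k=1$ case), the stretch of $G_F$ is exactly $\lambda$. This produces a network of stretch $\lambda$ on which some execution of $\cA$ requires at least $\lambda$ rounds, which is exactly the statement of the theorem.

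I do not see a real obstacle here: the work is all already in Lemma~\ref{lem:graph-diameter-no-failures}, and the only conceptual point to flag is that although stretch is defined for potentially disconnected final graphs, its maximum over connected final graphs is attained by the diameter, so the connected case already saturates the bound $\lambda$. One should also briefly mention that the theorem is tight: algorithm \textsc{Fast-Consensus} of Theorem~\ref{thm:fast-consensus} matches it when $\Lambda = \lambda$, so no further sharpening is possible.
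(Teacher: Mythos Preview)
Your proposal is correct and follows essentially the same approach as the paper: invoke Lemma~\ref{lem:graph-diameter-no-failures} with $D=\lambda$ to obtain a connected network of diameter~$\lambda$ on which a failure-free execution needs at least~$\lambda$ rounds, then observe that the stretch of a connected graph equals its diameter. The paper's proof is even terser than yours, omitting the $\lambda=0$ case and the remark on tightness, but the argument is identical.
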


\begin{proof} 
Let us take a network $G$ with $n\ge 2\lambda$ nodes and with diameter~$\lambda$ such that some execution of algorithm~$\cA$ on~$G$ takes at least $\lambda$ rounds with no link failures.
Such a connected network exists by Lemma~\ref{lem:graph-diameter-no-failures}.
The stretch of a connected network equals its diameter.
\end{proof}

Theorem~\ref{thm:lower-bound-stretch} shows that algorithm \textsc{Fast-Agreement}($\Lambda)$ is asymptotically time-optimal on networks prone to link failures, because the actual stretch $\lambda$ could be as large as an upper bound~$\Lambda$ on stretch.

\section{General Agreement with Short Messages}

\label{sec:minimal-knowledge}

We present a general disconnected-agreement algorithm using short  messages of $\cO(\log n)$ bits.
Algorithm \textsc{Fast-Agreement} presented in Section~\ref{sec:fast-agreement}, which also employs messages of $\cO(\log n)$ bits, relies on an upper bound on stretch $\Lambda$ that is a part of code, and if the actual stretch in an execution goes beyond $\Lambda$ then an execution of algorithm  \textsc{Fast-Agreement} may not be correct.
We assume in this section that nodes rely on minimal knowledge only and the given algorithm is correct for arbitrary patterns of link failures and the resulting stretches.
The algorithm terminates in at most $n$ rounds, while the number of nodes~$n$ is not known.
The running time is asymptotically optimal in case there are no failures, by Corollary~\ref{cor:n-half} in Section~\ref{sec:fast-agreement}.

The following approach could possibly result in reaching disconnected agreement in networks prone to link failures, while sending only short messages.
A node~$p$ maintains a set of input values in a dedicated variable, which it initializes to the singleton set with the node's input.
The node~$p$ does two things per round: (1) for each neighbor, $p$ sends some input value to this neighbor, but only such it has never sent before, (2) updates the set of known inputs value by adding new values just received in messages.
A node halts after executing as many rounds as the size of the set of input values and decides on the maximum input value known.
This algorithm works correctly if all the $n$ input values are distinct, because each node collects all the input values in its connected component in the final graph.
This algorithm is incorrect in general, which can be demonstrated on an example of a network of three nodes: node~$p_1$ with input~$1$, node~$p_2$ with input~$1$, and node~$p_3$ with input~$3$, connected into a line $p_1-p_2-p_3$.
To see incorrectness, suppose both links $p_1-p_2$ and $p_2-p_3$ are non-faulty.
The node $p_1$ halts at the second round and decides on~$1$ and node $p_3$ halts at the third round and decides on~$3$, while all the nodes are in the same connected component.

Next, we give an algorithm for disconnected agreement that uses short messages while the nodes do not know the size~$n$ of the network nor any bound on stretch.
Each node~$p$ has its name stored as $\texttt{name}_p$ and initial input value stored at $\texttt{input}_p$.
A node~$p$ forms a pair $(\texttt{name}_p,\texttt{input}_p)$, which we call \emph{input pair}, and sends it through each port as the first communication in an execution; simultaneously node~$p$ receives input pairs from its neighbors.
Every other node $q$ forms its respective pair $(\texttt{name}_q,\texttt{input}_q)$.
If $p\ne q$ then also $(\texttt{name}_p,\texttt{input}_p)\ne (\texttt{name}_q,\texttt{input}_q)$, even if $\texttt{input}_p=\texttt{input}_q$, so there are $n$ input pairs in total.
As an execution continues, nodes exchange input pairs  among themselves.
A message carries one input pair.
The goal for each node is to learn as many input pairs as possible and help other nodes in accomplishing the same goal.
A node halts in the first round whose number surpasses the number of input pairs the node has learned.

Next, we discuss how the algorithm is implemented.
The algorithm is called \textsc{SM-Agreement}, its pseudocode is in Figure~\ref{fig:alg-short-messages}.
A node maintains a list \texttt{Inputs} of all input pairs that the node has ever learned about. 
For each port~$\alpha$, a node maintains a set $\texttt{Channel}[\alpha]$ storing all input pairs that have either  been received through $\alpha$ or sent through $\alpha$.
At every round, a node verifies for each port~$\alpha$ if there is a message to be sent through the port.
This is determined by examining the list \texttt{Inputs} to check if there is an input pair in the list  that does not belong to the set $\texttt{Channel}[\alpha]$: if this is the case then the first such a pair is transmitted through $\alpha$.
At every round, a node may receive a message through a port~$\alpha$: if this occurs, the node adds the received input pair to set $\texttt{Channel}[\alpha]$, unless the pair is already there, and appends the pair to the list \texttt{Inputs}, unless the pair is already in the list.
A node maintains a counter of rounds called \texttt{round}, which is incremented in each round.
An execution ends when this counter  surpasses the size of the list \texttt{Inputs}.


\begin{figure}[t]

\hrule

\FF

\texttt{algorithm} \textsc{SM-Agreement} 

\FF

\hrule

\FF

\begin{enumerate}[nosep]
\item 
initialize: \texttt{Inputs} to empty list, $\texttt{round} \leftarrow 1$ ; append $(\texttt{name}_p, \texttt{input}_p)$ to \texttt{Inputs}  
\item
\texttt{for} each port $\alpha$ \texttt{do}
\begin{enumerate}[nosep]
\item[] 
initialize set $\texttt{Channel}[\alpha]$ to empty ; send $(\texttt{name}_p, \texttt{input}_p)$ through $\alpha$
\end{enumerate}
\item
\texttt{for} each port $\alpha$ \texttt{do}
\begin{enumerate}[nosep]
\item[] 
if a pair $(\texttt{name}_q, \texttt{input}_q)$ received through $\alpha$ \texttt{then} 
\begin{enumerate}[nosep]
\item[] 
add $(\texttt{name}_q, \texttt{input}_q)$ to $\texttt{Channel}[\alpha]$ ; append $(\texttt{name}_q, \texttt{input}_q)$ to \texttt{Inputs}
\end{enumerate}
\end{enumerate}
\item \label{SM-repeat-loop}
\texttt{repeat}
\begin{enumerate}[nosep]
\item 
\label{instr:SM-send}
\texttt{for} each port $\alpha$ \texttt{do}
\begin{enumerate}[nosep]
\item[]
\texttt{if} some item in \texttt{Inputs} is not in \texttt{Channel}$[\alpha]$ \texttt{then}
\begin{itemize}[nosep]
\item[]
 let $x$ be the first such an item ; send $x$ through $\alpha$ ; add $x$ to \texttt{Channel}$[\alpha]$
\end{itemize}
\end{enumerate}
\item 
\label{instr:SM-receive}
\texttt{for} each port $\alpha$ \texttt{do}
\begin{enumerate}[nosep]
\item[]
if a pair $(\texttt{name}_q, \texttt{input}_q)$ was just received through $\alpha$ \texttt{then} 
\begin{enumerate}[nosep]
\item[] 
add $(\texttt{name}_q, \texttt{input}_q)$ to $\texttt{Channel}[\alpha]$ ; append $(\texttt{name}_q, \texttt{input}_q)$ to \texttt{Inputs}
\end{enumerate}
\end{enumerate}
\item 
$\texttt{round} \leftarrow \texttt{round} +1$

\end{enumerate}
\texttt{until} $\texttt{round} > |\texttt{Inputs}|$ 
\item 
decide on the maximum input value in \texttt{Inputs}
\end{enumerate}

\FF

\hrule

\FF

\caption{\label{fig:alg-short-messages}
A pseudocode for a node~$p$. 
The operation of adding an item to a set is void if the item is already in the set.
The operation of appending an item to a list is void if the item is already in the list.
The notation $|\texttt{Inputs}|$ means the number of items in the list \texttt{Inputs}.}
\end{figure}

We may consider a modification of algorithm \textsc{SM-Agreement}, as presented in Figure~\ref{fig:alg-short-messages}, in which the repeat loop \eqref{SM-repeat-loop} is iterated indefinitely, rather than stopping after $|\texttt{Inputs}|$ iterations.
In an execution of the modified algorithm, eventually lists \texttt{Inputs} and sets \texttt{Channel}$[\alpha]$, for all ports~$\alpha$, stabilize in all nodes, so that messages are no longer sent.
Suppose two nodes $p$ and $q$ are connected by a reliable path at a round when node~$q$ decides.
By the round all the variables stabilize, the input pair $(\texttt{name}_r,\texttt{input}_r)$ will have reached node~$q$, because there is a reliable path from $p$ to~$q$.
We need to show that this also occurs in an execution of algorithm \textsc{SM-Agreement} without the modification of the condition controlling the repeat loop, for the same link failures in the two executions.
To this end, it suffices to demonstrate that node~$q$ withholds deciding long enough.

\begin{lemma}
\label{lem:SM-agreement}

If nodes $p$ and $q$ are connected by a reliable path at a round when node~$q$ decides, and the  list $\texttt{Inputs}_{p}$ includes an input pair $(\texttt{name}_r,\texttt{input}_r)$ at that round, then the  list $\texttt{Inputs}_{q}$  includes this pair $(\texttt{name}_r,\texttt{input}_r)$ at the round when $q$ decides.
\end{lemma}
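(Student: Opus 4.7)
The plan is to argue by contradiction. Suppose that some input pair $x=(\texttt{name}_r,\texttt{input}_r)$ lies in $\texttt{Inputs}_p(T_q)\setminus\texttt{Inputs}_q(T_q)$, where $T_q$ denotes the round at which $q$ decides, and let $p=v_0,v_1,\ldots,v_k=q$ denote the resilient path hypothesized at round~$T_q$.

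The first step is a structural invariant on every reliable edge $(u,v)$: at any round $s$ the set $\texttt{Channel}_u[\alpha_v](s)$ is contained in $\texttt{Inputs}_v(s)$, because every item stored in the former was either sent by $u$ along the link (so by reliability $v$ received it) or received by $u$ from $v$ (so $v$ had it to begin with). It follows that any item in $\texttt{Inputs}_u(s)\setminus\texttt{Inputs}_v(s)$ must belong to the pending set $P_{u\to v}(s):=\texttt{Inputs}_u(s)\setminus\texttt{Channel}_u[\alpha_v](s)$. Applying this along the resilient path, picking the largest index $j$ with $x\in\texttt{Inputs}_{v_j}(T_q)$ yields $j<k$ and $x\in P_{v_j\to v_{j+1}}(T_q)$.

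The second step combines the sending rule of step~\ref{instr:SM-send} with the termination condition. The sending rule says that $v_j$ transmits through $\alpha_{v_{j+1}}$ the first element of $\texttt{Inputs}_{v_j}$ not yet in the corresponding Channel, and new items are always appended to the tail, so from the round $\tau_x$ at which $x$ first appears in $\texttt{Inputs}_{v_j}$ the item $x$ is dispatched within $k_x+1$ further rounds, where $k_x$ counts the items preceding $x$ in $\texttt{Inputs}_{v_j}(\tau_x)$ that still lie in $P_{v_j\to v_{j+1}}(\tau_x)$. On the other hand, the inequality $T_q>|\texttt{Inputs}_q(T_q)|$ forces the existence of a ``silent'' round $s^\star\le T_q$ on the terminal edge $(v_{k-1},q)$: otherwise $v_{k-1}$ would have transmitted through $\alpha_{q}$ in every single round and delivered $T_q$ distinct items to $q$, contradicting $|\texttt{Inputs}_q(T_q)|<T_q$. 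At such an $s^\star$ the invariant gives $\texttt{Inputs}_{v_{k-1}}(s^\star-1)\subseteq\texttt{Inputs}_q(s^\star-1)\subseteq\texttt{Inputs}_q(T_q)$. Iterating this silent-round/queueing reasoning backwards along the path culminates in the conclusion that $x$ in fact reached $q$ by round $T_q$, contradicting the assumption.

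The main obstacle is making the backward iteration fully rigorous, because the intermediate nodes $v_{k-1},v_{k-2},\ldots,v_{j+1}$ may have decision rounds that differ from $T_q$---some of them might still be active beyond $T_q$, so the termination inequality is not directly available for them. One then has to rely on the complementary bound $|\texttt{Inputs}_{v_i}(T_q)|\ge T_q$ for each still-running $v_i$, and synchronize the silent rounds obtained on consecutive edges so that the total queueing delay of $x$ on its way from $v_j$ to $q$ is provably at most $T_q-\tau_x$.
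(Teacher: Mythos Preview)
Your structural invariant $\texttt{Channel}_u[\alpha_v](s)\subseteq\texttt{Inputs}_v(s)$ along a reliable edge is correct and useful, and the contradiction setup with the maximal index~$j$ is natural. But the obstacle you flag is a genuine gap, not a technicality: the silent-round argument works at the terminal edge only because you know $T_q>|\texttt{Inputs}_q(T_q)|$, and this hinges on $q$ having \emph{halted}. For an intermediate $v_i$ that is still running at round~$T_q$ you have the opposite inequality $|\texttt{Inputs}_{v_i}(T_q)|\ge T_q$, which gives you no silent round on $(v_{i-1},v_i)$ before~$T_q$; and for a $v_i$ that has already halted you would need to relate its own halting time $T_{v_i}$ to~$T_q$, which nothing in the hypotheses provides. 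The backward cascade therefore does not close as stated, and there is no evident way to synchronise the silent rounds across edges without importing exactly the kind of global accounting you are trying to avoid.

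The paper sidesteps this entirely with a \emph{forward} conservation argument that never looks at the termination of intermediate nodes. It concatenates a path $\gamma$ from $r$ to $p$ (the one along which $(\texttt{name}_r,\texttt{input}_r)$ first reached $p$) with the resilient path $\delta$ from $p$ to~$q$, and views each node on the concatenation $\zeta=(u_1,\ldots,u_\ell)$ as holding a conceptual FIFO queue toward its successor. In the baseline scenario the pair $v_1=(\texttt{name}_r,\texttt{input}_r)$ advances one hop per round, while $q$ receives a fresh pair $v_\ell,v_{\ell-1},\ldots,v_1$ in each of $\ell$ consecutive rounds, so $|\texttt{Inputs}_q|$ grows by one per round and $q$ cannot satisfy $\texttt{round}>|\texttt{Inputs}_q|$ before $v_1$ arrives. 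Deviations are then handled by a one-line conservation: a shortcut that injects some $v_i$ further down the path can only make $v_1$ arrive earlier at~$q$; an external pair that joins $\zeta$ delays $v_1$ by at most one round but simultaneously contributes one more element to $\texttt{Inputs}_q$, postponing $q$'s decision by exactly one round. Every unit of delay to $v_1$ is matched by a unit of extra patience at~$q$, so $q$ is still running when $v_1$ arrives. Your pending-set and Channel invariants would slot naturally into this forward argument if you want to make the queue bookkeeping more formal, but the key is to account at $q$ rather than to chase halting times along the path.
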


\begin{proof}
Let $\gamma$ be a path between nodes $r$ and $p$ through which pair $(\texttt{name}_r,\texttt{input}_r)$ arrives to node~$p$ first; suppose $\gamma = (s_1,s_2,\ldots, s_k)$, where $r=s_1$ and $s_k=p$.
Let $\delta$ be a reliable path connecting node~$p$ to $q$ at a round when node~$q$ decides; suppose $\delta = (t_1,t_2,\ldots, t_\ell)$, where $p=t_1$ and $t_\ell=q$.
We want to show that all input pairs originating at the nodes on the paths~$\gamma$ and~$\delta$ reach node~$q$ by the round in which $q$  decides.
Consider a path $\zeta$ obtained by concatenating path~$\gamma$ with path~$\delta$ at the node~$p=s_k=t_1$.
Let us denote the nodes on this path as $\zeta=(u_1,\ldots,u_\ell)$, where $u_1=r$ and $u_\ell=q$, and denote by~$v_i$ a pair of the form $(\texttt{name},\texttt{input})$ originating at~$u_i$, for reference.

The simplest scenario, for node~$q$ to receive input pair $v_1=(\texttt{name}_r,\texttt{input}_r)$, occurs when pair~$v_1$ gets sent in each consecutive round along the path~$\zeta$ until it reaches node~$q$.
During these transmissions, node~$q$ receives consecutive pairs~$v_i$, starting  from $v_\ell=(\texttt{name}_q,\texttt{input}_q)$ originating in~$q$, through $v_1=(\texttt{name}_r,\texttt{input}_r)$. 
In this scenario, a new pair gets added to list \texttt{Inputs}$_q$ in each of these rounds, which makes node~$q$ postpone deciding for a total of at least $\ell$ rounds.
We call this the \emph{basic} scenario.
Let us consider conceptual queues $Q_i$ at each of the nodes $u_i$ on $\zeta$, for $i<\ell$, where the queue~$Q_i$ stores pairs that still need to be sent to~$u_{i+1}$.
These queues are manipulated by actions specified in instructions~\eqref{instr:SM-send} and~\eqref{instr:SM-receive} in the pseudocode in Figure~\ref{fig:alg-short-messages}.
In the basic scenario, each queue~$Q_i$ is initialized with~$u_i$, and then each pair~$v_i$ moves through the queues~$Q_j$, for $j=i+1, i+2,\ldots$ during consecutive rounds, to eventually arrive at~$u_\ell=q$.

Next, we consider two possible alternative ways for input pairs to be sent along $\zeta$ to arrive at~$q$, while departing from the basic scenario.

One possible departure from the basic scenario occurs when a pair~$v_i$ gets delivered to~$u_j$, for $j>i$, by a shortcut outside of $\zeta$ rather than through consecutive nodes $u_{i+1}, u_{i+2},\ldots, u_j$.
This may result in pair $v_i$ reaching $q$ earlier than in the basic scenario, while postponing delivery of other pairs by one round.
What does not change is that a new pair gets delivered to $q$ at each round, so $q$ keeps waiting, and also $v_i$ gets added to~$Q_k$ to be later removed from~$Q_k$ exactly once, at each node~$u_k$ on the path~$\zeta$, for $k\ge i$, while sent down the path towards~$q$, by the rules of manipulating lists \texttt{Inputs} and sets \texttt{Channel} specified in instructions~\eqref{instr:SM-send} and~\eqref{instr:SM-receive} in the pseudocode in Figure~\ref{fig:alg-short-messages}.
In this scenario, pair $(\texttt{name}_r,\texttt{input}_r)$ gets delivered to $q$ no later than in the basic scenario.

Another possible departure occurs when a pair~$v$ originating at a node outside $\zeta$ gets delivered to a node on $\zeta$ and then travels along $\zeta$ towards~$q$.
Each such a pair $v$ simply increases the number of pairs transmitted along $\zeta$, so it may delay $v_1=(\texttt{name}_r,\texttt{input}_r)$ from reaching $q$ by one round.
At the same time, such a pair $v$ increments the size of the list \texttt{Inputs} at $q$ by one when added to it,  thereby extending the time period until $q$ decides by one round.
This makes node~$q$ wait long enough to receive input pair $(\texttt{name}_r,\texttt{input}_r)$ by the round it decides.

The two possible departures from the basic scenario discussed above can both occur in an execution, and also there could be multiple instances of pairs creating such scenarios departing from the basic one, without affecting the conclusion that node~$q$ waits long enough to receive pair $(\texttt{name}_r,\texttt{input}_r)$ prior to deciding.
\end{proof}


\begin{theorem}
\label{thm:SM-Agreement}

Algorithm {\sc SM-Agreement} solves disconnected agreement in $n+1$ rounds relying on minimal knowledge and using short messages each of $\cO(\log n)$ bits.
\end{theorem}

\begin{proof} 
Input pairs stored in the list \texttt{Inputs} at a node have names of nodes as their first components, so there can be at most $n$ pairs ever added to this list.
A node decides immediately when a round number exceeds the size of list~\texttt{Inputs}, which occurs by round~$n+1$.
This gives termination and the bound on running time.
A node decides on an input value in an input pair present  in its list \texttt{Inputs}, which stores pairs that originated at nodes of the network; this implies validity.
By Lemma~\ref{lem:SM-agreement}, all nodes in a connected component of the final graph~$G_F$ have identical lists \texttt{Inputs} at a round of deciding. 
A decision is on a maximum value in a list, which is uniquely determined by the set of pairs stored in the list; this gives  agreement. 
A message carries one input pair, so it consists of $\cO(\log n)$ bits, per the assumptions about the node names and input values.
\end{proof}

\section{Agreement with Linear Messages}

\label{sec:linear-messages}

The goal of this section is to develop an algorithm whose running time scales well to the stretch actually occurring in an execution.
We are ready to use messages longer than short ones used in the previous sections, and will use linear messages of $\cO(n\log n)$ bits.
Nodes are to rely on the minimal knowledge only: each node knows its  own name and can distinguish  ports by their communication functionality. 
We give an algorithm that works in $(\lambda+2)^3$ rounds, where $\lambda$ is the stretch at the round of halting. 
The size of linear messages imposes constrains on the design of algorithms, and the obtained algorithm is not early stopping, but its running time is polynomial in~$\lambda$.

A message of $\cO(n\log n)$ bits allows to represent any set of node names.
Once a node receives a message with names of nodes it can be certain that information about the magnitude of input values from these nodes has been received as well, assuming some relevant information about the input values was sent along.
In principle, a node may halt as soon as it has heard from all the nodes in its connected component.
The challenge is to detect such a moment, because a message of $\cO(n\log n)$ bits can bring in many node names at once at a round, but still not all of them, and that round may be followed by a long stretch of rounds without receiving new names.
To determine when to halt, we use timestamps, which are set to the current round number when created.
A message of $\cO(n\log n)$ bits can represent a set of pairs consisting of a node's name and a timestamp, as long as there is at most one such a pair for each node and the timestamp represents a round during an execution, while the running time of the algorithm is polynomial in~$n$.

\Paragraph{An overview of the algorithm.}

Every node maintains a counter of round numbers, incremented when a round begins.
In each round, a node~$p$ generates a new timestamp~$r$ equal to the current value of the round counter, and forms a pair $(\texttt{name}_p, r)$, which we call a \emph{timestamp pair} of node~$p$.
Such timestamp pairs are sent to the neighbors, to be forwarded through the network.
Each node node~$p$ stores a timestamp pair with the latest timestamp for a node it has ever received a timestamp pair from, and sends all such pairs to the neighbors in every round.  

An execution of the algorithm at a node is partitioned into \emph{epochs}, each epoch being a contiguous interval of rounds.
Epochs are not coordinated among nodes, and each node governs its own epochs.
The first epoch begins at round zero, and for the following epochs, the last round of an epoch is remembered in order to discern timestamp pairs sent in the following epochs.
For the purpose of monitoring progress of discovering the nodes in the connected component during an epoch, each node maintains a separate collection of timestamp pairs, which we call \emph{pairs serving the epoch}.
This collection stores only timestamp pairs sent in the current epoch, a pair with the greatest  timestamp per node which originally generated the pair.

The \emph{status of a node $q$ at a node~$p$} during an epoch can be either absent, updated, or stale.
If the node~$p$ does not have a timestamp pair for~$q$ serving the epoch then $q$ is \emph{absent} at~$p$.
If at a round of an epoch the node~$p$ either adds a timestamp pair serving the epoch for an absent node~$q$ or replaces a timestamp pair of a node~$q$ by a new timestamp pair with a greater timestamp than the previously held one, then $q$ is \emph{updated} at this round.
If the node~$p$ has a timestamp pair for a node~$q$ serving the epoch but does not replace it at a round with a different timestamp pair to make it updated, then $q$ is \emph{stale} at this round.
If the node~$p$ at a round $t_1$ receives  an epoch-serving timestamp pair $(\texttt{name}_q, t_2)$ for a node~$q$ that replaces a previously stored timestamp for~$q$ then the number $t_1-t_2$ is called the \emph{range of $q$ at~$p$}. 
Ranges are determined only for updated nodes at~$p$, since $q$ becomes updated after $p$ receives $(\texttt{name}_q, t_2)$.  
A range of~$q$ is the length of a path traversed by a timestamp pair of~$q$ to reach~$p$ in the epoch, at the round the timestamp pair is received, so it is a lower bound on the distance from~$q$ to~$p$ at this round.

We say that an epoch of a node~$p$ \emph{stabilizes} at a round if either no new node has its status changed from absent to updated at~$p$ or no node gets its range changed at~$p$.
If an epoch stabilizes at a round, then the epoch ends.
During an epoch, a node $p$ builds a set of names of nodes from which it has received timestamp pairs serving this epoch. 
A similar set produced in the previous epoch is also stored.
As an epoch ends, $p$ compares the two sets.
If they are equal then $p$ stops executing epochs, decides on the maximum input value  ever learned about, notifies the neighbors of the decision, and halts.


\begin{figure}[t]
\hrule
\FF
\texttt{algorithm} \textsc{LM-Agreement} 
\FF
\hrule
\FF
\begin{enumerate}[nosep]
\item \label{LM:initialization}	
initialize: 
$\texttt{candidate}_p \leftarrow \texttt{input}_p$ ,  
$\texttt{round} \leftarrow 0$,
$\texttt{Timestamps} \leftarrow \emptyset$,
$\texttt{Nodes} \leftarrow \perp$
\item  \label{LM:outer-repeat-loop}
\texttt{repeat}
\begin{enumerate}[nosep]
\item \label{LM:PreviousNodes-initialization}
$\texttt{epoch} \leftarrow \texttt{round}$, 
$ \texttt{PreviousNodes}  \leftarrow \texttt{Nodes} $,
$\texttt{EpochTimestamps} \leftarrow \emptyset$
\item 
\texttt{repeat}  \label{LM:inner-repeat-loop}
\begin{enumerate}[nosep]
\item \label{LM:round-increment}
$\texttt{round} \leftarrow \texttt{round} + 1$
\item 
add pair $(\texttt{name}_p, \texttt{round})$ to sets \texttt{Timestamps} and \texttt{EpochTimestamps}
\item \label{LM:for-each-port-do}
\texttt{for} each port  \texttt{do}
\begin{enumerate}[nosep]
\item
send \texttt{Timestamps} and $(\textrm{this-is-candidate},\texttt{candidate}_p)$ through the port 
\item
receive messages coming through the port 
\end{enumerate}
\item \label{LM:candidate}
\texttt{for} each received pair $(\textrm{this-is-candidate},x)$ \texttt{do}
\begin{enumerate}[nosep]
\item[]
\texttt{if}  $x > \texttt{candidate}_p$ \texttt{then} assign $\texttt{candidate}_p\leftarrow x$ 
\end{enumerate}
\item 
\texttt{for} each received timestamp pair  $(\texttt{name}_q,y)$ \texttt{do}
\begin{enumerate}[nosep]
\item
add $(\texttt{name}_q,y)$ to \texttt{Timestamps} if this is a good update 
\item
\texttt{if} $y> \texttt{epoch}$ \texttt{then} add $(\texttt{name}_q,y)$ to \texttt{EpochTimestamps} if this is a good update
\end{enumerate}
\end{enumerate}	
\item \label{LM:control-inner-loop}
\texttt{until} epoch stabilized at the round
\item \label{LM:determine-Nodes}
set \texttt{Nodes} to the set of first coordinates of timestamp pairs in \texttt{EpochTimestamps}
\end{enumerate}
\item \label{LM:inner-until-condition}
\texttt{until} $ \texttt{PreviousNodes} = \texttt{Nodes} $ 
\item  
send $(\textrm{this-is-decision},\texttt{candidate}_p)$ through each port
\item \label{LM:deciding} 
decide on $\texttt{candidate}_p$ 
\end{enumerate}
\FF
\hrule
\caption{\label{fig:lm-agreement}
A pseudocode for a node~$p$.
Each iteration of the main repeat-loop~\eqref{LM:outer-repeat-loop} makes an epoch.
Symbol $\perp$ denotes a value different from any actual set of nodes, so the initialization of \texttt{Nodes} to~$\perp$ in line~\eqref{LM:initialization} guarantees execution of at least two epochs.
A \emph{good update} of a timestamp pair for a node $q$ either adds a first such a pair for $q$ or replaces a present pair for~$q$ with one with a greater timestamp.
At each round, $p$ checks to see if a message of the form $(\textrm{this-is-decision},z)$ has been  received, and if so then $p$ forwards this message through each port, then decides on~$z$, and halts. 
}
\end{figure}

\Paragraph{An implementation of the algorithm.}

The algorithm is called \textsc{LM-Agreement}, its pseudocode is given  in Figure~\ref{fig:lm-agreement}.
An execution starts with initialization of some variables by instruction~\eqref{LM:initialization}.
The main repeat loop follows as instruction~\eqref{LM:outer-repeat-loop};  one iteration of the main repeat-loop makes an epoch.
The pseudocode refers to a number of variables; we review them next.

Each node~$p$ uses a variable \texttt{candidate}$_p$, which it initializes to $\texttt{input}_p$.
Node~$p$ creates a pair $(\textrm{this-is-candidate},\texttt{candidate}_p)$, which we call a \emph{candidate pair of~$p$}.
Nodes keep forwarding their candidate pairs to the neighbors continually.
If a node~$p$ receives a candidate pair of some other node with a value~$x$ such that $x>\texttt{candidate}_p$ then $p$ sets its $\texttt{candidate}_p$ to~$x$.
An execution concludes with deciding by performing instruction~\eqref{LM:deciding}.
Just before deciding, a node notifies the neighbors of the decision.
Once a notification of a decision is received, the recipient forwards the decision to its neighbors, decides on the same value, and halts.

The variable \texttt{round} is an integer counter of rounds, which is incremented in each iteration of the inner repeat loop by executing instruction~\eqref{LM:round-increment}.
The round counter is used to generate timestamps.
The variable \texttt{Timestamps} stores timestamp pairs that $p$ has received and forwards to its neighbors.
The variable \texttt{EpochTimestamps} stores timestamp pairs serving the current epoch, which have been  generated after the beginning of the current epoch.
Each set \texttt{Timestamps} and \texttt{EpochTimestamps} stores at most one timestamp pair per node, the one with the greatest received timestamp.
Each iteration of the inner repeat loop~\eqref{LM:inner-repeat-loop} implements one round of sending and collecting messages through all the ports by executing instruction~\eqref{LM:for-each-port-do}.
The inner repeat loop~\eqref{LM:inner-repeat-loop} ends as soon as the epoch stays stable at a round, which is represented by condition~\eqref{LM:control-inner-loop}.

The variable \texttt{Nodes} stores the names of nodes from which timestamp pairs serving  the epoch have been received.
The variable \texttt{Nodes} is calculated at the end of an epoch by instruction~\eqref{LM:determine-Nodes}.
The set of nodes in \texttt{Nodes} at the end of an epoch is stored as \texttt{PreviousNodes} at the start of the next epoch.
The main repeat loop~\eqref{LM:outer-repeat-loop} stops to be iterated as soon as the set of names of nodes stored in \texttt{Nodes} stays the same as the set stored in \texttt{PreviousNodes}, which is checked by condition~\eqref{LM:inner-until-condition}.
We want the set \texttt{Nodes} to be calculated  in two consecutive epochs at least once.
To guarantee this, \texttt{PreviousNodes} is initialized to a special value denoted~$\perp$ by instruction~\eqref{LM:PreviousNodes-initialization} in the pseudocode, when the instruction is executed for the first time.
The value~$\perp$ is defined by the property that it is different from any set of nodes.

\Paragraph{The correctness and running time.}

A node~$p$ is said to have \emph{heard of node~$q$} in an epoch, if the node~$p$ received a timestamp pair from~$q$ serving this epoch.
We mean an epoch according to the node~$p$, since epochs are not coordinated across the network, and so a different node~$q$ could be in a different epoch by its count of iterations of the main repeat loop~\eqref{LM:outer-repeat-loop} in the pseudocode in Figure~\ref{fig:lm-agreement}.

\begin{lemma}
\label{lem:LM-radius}

If a node~$p$ has not heard of some node  by a round of an epoch yet and that node is still connected to~$p$ by a reliable path at the next round, then the epoch continues through the next round.
\end{lemma}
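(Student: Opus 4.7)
The plan is to exhibit a concrete event at $p$ in round $t+1$ that contradicts the stabilization criterion, so that the inner repeat loop~\eqref{LM:inner-repeat-loop} does not exit at round $t+1$. Let $\gamma = (q=u_0, u_1, \ldots, u_{k-1}, u_k = p)$ be the reliable path guaranteed at round $t+1$. Because a link is called reliable only when it has never yet failed, each edge of $\gamma$ has in fact been intact at every round from the beginning of the execution through round $t+1$, a property I will use to control what the nodes on $\gamma$ hold in their Timestamps.

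First I would focus on $p$'s direct neighbour $u_{k-1}$ on $\gamma$. At the beginning of round $t+1$ node $u_{k-1}$ executes the step immediately after line~\eqref{LM:round-increment}, inserts the pair $(\texttt{name}_{u_{k-1}}, t+1)$ into its own Timestamps, and then transmits Timestamps across the reliable edge $\{u_{k-1}, p\}$, so $p$ receives this pair during round $t+1$. Next I would verify that this reception is a good update at $p$. The timestamp $t+1$ strictly exceeds $\texttt{epoch}_p$, which was set to some round no later than $t$, so the pair is epoch-serving; and any pair for $u_{k-1}$ previously held in $\texttt{EpochTimestamps}_p$ carries a timestamp equal to the round at which $u_{k-1}$ generated it and is therefore at most $t$, which is strictly less than $t+1$. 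The incoming pair thus either fills a previously absent slot or strictly replaces an older one. In the former sub-case a node changes status from absent to valid at $p$ at round $t+1$; in the latter the stored pair for $u_{k-1}$ is overwritten, and a fresh range for $u_{k-1}$ at $p$ is recorded. Either event matches one of the clauses whose joint absence defines stabilization, so the inner loop does not terminate at round $t+1$.

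The main obstacle I anticipate is reconciling the second sub-case with the tightest possible reading of ``no node gets its range changed''. If this clause is interpreted as requiring the stored range value to change strictly, then a fresh pair from a direct neighbour, whose range at $p$ was already $0$ because $\{u_{k-1},p\}$ has been reliable throughout, need not alter the stored range. In that case I would fall back on the hypothesis concerning $q$. Since $\gamma$ has been reliable since the start of the epoch and $q$ is still unheard-of at $p$ at round $t$, a straightforward induction on distance along $\gamma$ shows that, for every round $r$ of $p$'s current epoch, the nodes on $\gamma$ that $p$ has heard of by round $r$ are exactly those within distance $r-\texttt{epoch}_p$ from $p$ along $\gamma$. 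Hence the node on $\gamma$ at distance $t-\texttt{epoch}_p+1$ from $p$ is absent at round $t$ and becomes valid at $p$ at round $t+1$ by a first addition, triggering the ``absent-to-valid'' clause and again forbidding stabilization.
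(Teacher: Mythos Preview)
Your first approach correctly identifies its own weakness: receiving a fresh pair from the neighbour $u_{k-1}$ need not change the \emph{value} of its range at~$p$, so this alone cannot block stabilization. The real gap is in the fallback. You assert that along $\gamma$ the nodes $p$ has heard of by round $r$ are \emph{exactly} those within distance $r-\texttt{epoch}_p$ from $p$ along~$\gamma$, and then pick the node at distance $t-\texttt{epoch}_p+1$ as the one that flips from absent to valid. But $\gamma$ was chosen merely as \emph{some} reliable $p$--$q$ path, not a shortest one. A node $u_{k-j}$ far along $\gamma$ may sit at actual distance much smaller than $j$ from $p$ via a shortcut outside~$\gamma$, so $p$ can already hold an epoch-serving pair for it. Concretely, take $\gamma=(q,u_1,u_2,p)$ together with an extra reliable edge $\{p,u_1\}$: at round $\texttt{epoch}_p+1$ node $p$ has heard of $u_1$ and $u_2$ but not $q$; your rule points to $u_1$ (distance $2$ along $\gamma$) as the absent node at round $t=\texttt{epoch}_p+1$, yet $u_1$ is already valid. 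The ``straightforward induction'' you invoke does not go through, because the hypothesis that $q$ is unheard-of gives no control over intermediate vertices of a non-shortest path.

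The paper's proof closes exactly this gap by taking $\pi$ to be a \emph{shortest} reliable path from $p$ to $q$ at the round in question. Then the node $s_i$ at position $i$ on $\pi$ is at true distance $i$ from $p$ (subpaths of shortest paths are shortest), so its first epoch-serving pair reaches $p$ precisely at round $i$ of the epoch, and either an absent-to-valid transition or a genuine range increase is forced. Your argument becomes correct once you replace ``the reliable path guaranteed'' by ``a shortest reliable path'', after which the induction you sketch coincides with the paper's.
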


\begin{proof}
The proof is by induction on round numbers in an epoch.
The node~$p$ starts an epoch with an empty set of timestamps pairs serving the epoch and adds at least its own timestamp pair to make itself updated at the first round of the epoch.
If the node~$p$ has neighbors connected to it by reliable links, then, at the first round  of the  epoch, $p$ hears of them and their status becomes updated as well.
This means that an epoch never stabilizes at the first round, so it always continues beyond the first round.
This provides the base step of induction.

Next we consider the inductive step.
Let a round~$i$ of an epoch executed by the node~$p$ be such that node~$p$ has not heard of some node~$q$ in its current connected component up to round~$i-1$, and that such a node~$q$ stays connected to~$p$ during round~$i$.
If the node~$p$ has not heard of the  node~$q$ by round~$i-1$, then $q$'s distance from~$p$ is greater than~$i-1$ at the beginning of round~$i$, so the distance is at least~$i$.
Let $\pi=(s_0, s_1,\ldots, s_\ell)$ be a shortest path from~$p$ to~$q$, that exists at round~$i$, where $p=s_0$ and $ s_\ell=q$, for $\ell\ge i$.
Node~$s_{i}$ has its timestamp pair delivered by $s_1$ to $p=s_0$ at round~$i$, because such a timestamp pair has just completed its traversal of the path~$\pi$ towards~$p$.
This arrival establishes the range of~$s_{i}$ as~$i$, because this is the distance to~$p$ at this round.
If the node~$p$ has not heard of~$s_{i}$ before, then $p$ changes the status of $s_{i}$ from absent to updated.
If the node~$p$ has heard of~$s_{i}$ before, then the range of $s_{i}$ at $p$ is at most~$i-1$ after round~$i-1$, so it gets changed to~$i$.
The epoch does not stabilize during round~$i$ in either case, so it continues through the next round $i+1$.
\end{proof}

Each epoch at a node~$p$ eventually comes to an end.
This is because eventually all nodes in the connected component of $p$ are either updated or stale at~$p$, and their ranges stabilize to ones resulting from link failures that manifested themselves in the execution up to this point.

\begin{lemma}
\label{lem:LM-epoch}

The duration of an epoch at every node  is at most $(\lambda+2)^2$.
\end{lemma}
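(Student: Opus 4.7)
The plan is to bound the length of a single epoch at node $p$ by counting the rounds in which the epoch can fail to stabilize. A round is non-stabilizing only if during it at least one node undergoes a status change from absent to valid at $p$ or some already-valid node has its range change at $p$; the epoch length is therefore at most one more than the number of non-stabilizing rounds.

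The first step is to establish that, throughout an epoch, the range of a valid node $q$ at $p$ at round $r$ equals $\text{dist}(q,p,r)$, the current shortest-path distance in the reliable graph. The argument rests on two observations. First, because every node regenerates its timestamp pair at every round and every intermediate node forwards the freshest pair it holds, $p$ receives from $q$ at round $r$ the pair with the largest timestamp $t$ such that some path of length $r-t$ from $q$ to $p$ has been functional throughout rounds $t,t+1,\ldots,r$. Second, because links only fail and never reappear, any path realizing $\text{dist}(q,p,r)$ at round $r$ has been functional throughout those preceding rounds, so the largest valid $t$ is exactly $r-\text{dist}(q,p,r)$. Two consequences follow: ranges are bounded by the diameter of $p$'s component and hence by $\lambda$, and ranges are monotonically non-decreasing in $r$ (mirroring the monotone growth of distances under link failures).

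The second step is to bound the number of non-stabilizing rounds. For status-change rounds I observe that $q$ becomes valid exactly when its first in-epoch pair arrives, which happens at the epoch's start round plus the current distance from $q$ to $p$; since this distance is at most $\lambda$, the distinct status-change rounds number at most $\lambda + 1$. For range-change rounds I use that ranges are monotone within $\{1,\ldots,\lambda\}$, so each valid $q$ can contribute at most $\lambda-1$ range changes, and amortize via the observation that distance evolution is coordinated across the BFS tree rooted at $p$: simultaneous range changes across many nodes at the same distance level from $p$ occupy only a single non-stabilizing round. Grouping events by the distance level they witness, each of the at most $\lambda + 1$ levels can be ``refreshed'' by monotone distance growth at most $\lambda + 1$ times, giving a bound of $(\lambda + 1)^2$ distinct range-change rounds; combined with the $\lambda + 1$ status-change rounds and one stabilizing round, the epoch length is at most $(\lambda + 2)^2$.

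The principal obstacle is the range-change count: a naive tally that sums range changes over all nodes gives $\cO(|V_p|\cdot\lambda)$ events, which is loose when $|V_p|$ is much larger than $\lambda$ (for example, in a star topology $|V_p|$ is $\Theta(n)$ while $\lambda = 2$). Overcoming this requires counting \emph{rounds} rather than \emph{events}, using both the synchronized propagation of fresh timestamps on $p$'s current shortest-path tree (so that many simultaneous changes collapse into a single round) and the bounded, monotone nature of the distance function to cap the distinct rounds in which non-trivial change can occur.
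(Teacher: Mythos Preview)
Your first step overstates what holds: the range of $q$ at $p$ at round $r$ need not equal $\text{dist}(q,p,r)$. When a link on the old shortest $q$--$p$ path fails, the freshest timestamp $p$ holds for $q$ may still have arrived along that now-broken path, so the range can lag strictly behind the new distance until fresh pairs propagate along the longer alternate route. What \emph{is} true---and is what you actually need for monotonicity---is that the freshest timestamp $T_p^q(r)$ satisfies $T_p^q(r)\le T_p^q(r-1)+1$ (provable by induction, using that a link reliable at round $r$ was reliable at round $r-1$), so $r-T_p^q(r)$ is non-decreasing and bounded by the component diameter. This is a fixable inaccuracy.

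The substantive gap is in your count of range-change rounds. You assert that ``each of the at most $\lambda+1$ levels can be refreshed at most $\lambda+1$ times'' but give no argument, and the claim does not follow from the ingredients you list. Take $p$ adjacent to $q_1,\ldots,q_k$ and to one end of a path $a_1\text{--}\cdots\text{--}a_m$, with every $q_i$ also adjacent to $a_m$. If the edge $p\,q_i$ fails at round $i$, then $q_i$ goes expired and later becomes valid again at round $i+m$ with range $m$: the \emph{single} level $m$ is refreshed at the $k$ distinct rounds $m+1,m+2,\ldots,m+k$, while $\lambda=m+1$. Your ``coordination across the BFS tree'' does nothing here, because the $q_i$'s sit at the same BFS level yet the adversary staggers their failure times, and many nodes at one level can update at pairwise distinct rounds. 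So the step from per-node monotonicity to a $(\lambda+1)^2$ bound on distinct non-stabilizing rounds is not established.

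The paper's argument is organised differently. It does not count non-stabilizing rounds at all; instead it fixes a node $q$, notes that $\text{dist}(q,p)$ can increase at most $D$ times (where $D$ is the diameter of $p$'s component when the epoch ends), and argues that after each increase to a new value $i$ the ranges along a current shortest $q$--$p$ path settle within $i$ further rounds. Summing those delays gives $D+\sum_{i\le D-1} i\le D^2\le(\lambda+2)^2$ as a bound on the round by which $q$'s range has finished changing, and the epoch ends once this holds for every $q$. That is a per-$q$ bound on a round number, not a global tally of change rounds---a different decomposition from yours. (The paper's argument is itself terse about why staggering across different $q$'s cannot defeat this bound; the configuration above is a useful test for either approach.)
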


\begin{proof}
Let us consider a node $p$.
At a round $i$ of an epoch, the status of every node of distance at most~$i$ from~$p$ at round~$i$ becomes either updated or stale at node~$p$.
Eventually a round $k$ occurs such that the diameter of the connected component of $p$ is $k$, and after this round node~$p$ will have heard of every node in its connected component.
If a node~$q$ is of distance~$i$ from $p$ at a round~$j$, and this distance stays equal to~$i$ for at least $i$ rounds following round~$j$, then the range of every node on a shortest path of length~$i$ from~$p$ to~$q$ gets updated to its current value, which happens by round~$j+i$. 
Let $q$ be an arbitrary node.
The distance from~$q$ to node~$p$ can change at most as many times as the diameter of the connected component of~$p$ at the round $p$ ends the epoch.
After each such a change, it takes up to as many rounds as the distance from~$p$ to~$q$ to have ranges of nodes on a shortest path from~$q$ to~$p$ updated to new values.

Let $D$ denote the diameter of the connected component of node~$p$ when it ends the epoch.
If $D=0$ then the epoch ends after two rounds.
If $D\ge 1$ then the round in which $p$ ends the epoch is at most
\[
D + \sum_{i=1}^{D-1} i^2 = D+ \frac{D(D-1)}{2}\le D^2
\ .
\]
It follows that an epoch takes at most $(\lambda+2)^2$ rounds
\end{proof}

 
\begin{theorem}
\label{thm:lm-agreement}

Algorithm \textsc{LM-Agreement} solves disconnected agreement in $(\lambda+2)^3$ rounds relying on minimal knowledge and using messages of $\cO(n\log{n})$ bits.
\end{theorem}

\begin{proof} 
If a node~$p$ decides then the decision is on its candidate value stored in the variable \texttt{candidate}.
Such a decision value is the initial input value of some node, by the instructions~\eqref{LM:initialization} and~\eqref{LM:candidate}  in Figure~\ref{fig:lm-agreement}.
This gives validity.

For agreement, we argue that when a node~$p$ decides, then it knows the maximum candidate value of all the nodes in its connected component at the round of deciding.
The equality of sets \texttt{Nodes} and \texttt{PreviousNodes}, verified as condition~\eqref{LM:inner-until-condition} in the pseudocode in Figure~\ref{fig:lm-agreement} at the end of the epoch in which $p$ decides, guarantees that the nodes of the connected components at the ends of the current and previous epochs have stayed the same.
Consider the maximum candidate value present among the nodes of the connected component of $p$ at the round of its deciding. 
This candidate value was also maximum among the values held by nodes of the connected component of~$p$ in the previous epoch, since no nodes got disconnected from~$p$ in the current epoch.
Suppose the maximum candidate value was held by a node~$q$ in the connected component at the end of the previous epoch. 
The node~$p$ hears from~$q$ in the current epoch, by Lemma~\ref{lem:LM-radius}.
The candidate value of~$q$ travels along the same paths to~$p$ as the timestamp pairs from~$q$.
It follows that the node~$p$ decides on the candidate value of the node~$q$, and this candidate value is maximum among candidate values of all nodes in the connected component at the round of deciding.

Next, we estimate the running time.
A connected component of a node~$p$ can evolve through a sequence of contractions, occurring when the connected component shrinks and some nodes get disconnected to make other connected components. 
The number of epochs for every node is at least two, and it is at most the number of connected components of the final graph plus~$1$, for which the stretch plus~$1$ is an upper bound.
An epoch at a node~$p$ takes at most $(\lambda+2)^2$ time, by~Lemma~\ref{lem:LM-epoch}.
The number of rounds by halting for a node is a sum of lengths of its epochs, so it is at most~$(\lambda+2)^3$.

Finally, we estimate the size of messages.
A node sends its variable \texttt{Timestamps} and a candidate pair $(\textrm{this-is-candidate},\texttt{candidate})$ in a message.
The set \texttt{Timestamps} includes at most one timestamp pair per node.
A node's name needs $\cO(\log n)$ bits and a timestamp needs at most $\lg n^3=\cO(\log n)$ bits, because $\lambda <n$.
Each candidate value is some original input of a node, so it also needs $\cO(\log n)$ bits.
\end{proof}

\section{Early Stopping Agreement}

\label{sec:early-stopping}

We give an early-stopping disconnected agreement algorithm whose running time performance~$\cO(\lambda)$ scales optimally to the stretch $\lambda$ occurring in an execution by the time of halting. 
Nodes rely only on the minimal knowledge, similarly as in algorithms \textsc{SM-Agreement} (in Section~\ref{sec:minimal-knowledge}) and \textsc{LM-Agreement} (in Section~\ref{sec:linear-messages}), but messages  carry $\cO(m\log n)$ bits.
This size is greater than that of short messages with $\cO(\log n)$ bits in algorithm~\textsc{SM-Agreement} and linear messages with $\cO(n\log n)$ bits in algorithm~\textsc{LM-Agreement}.

\Paragraph{An overview of the algorithm.}

A graph whose vertices represent nodes and edges stand for links is like a map of the network at a round.
Nodes executing the algorithm keep sending their knowledge of the network's map to neighbors, while simultaneously receiving similar information from them.
The goal for each node is to build an approximation of a map of the network, which we call a snapshot.
More precisely, a \emph{snapshot} of the network at a node~$p$ consists of the names of nodes as vertices and edges between vertices representing links, but only these nodes and links that $p$ has heard about.
A snapshot at a node~$p$ may include the initial input for a vertex, should the node~$p$ know the input of a node represented by this vertex.
A whole snapshot encoded as a message takes $\cO(m\log n)$ bits.
We explain how nodes manage snapshots next.

The model of minimal knowledge means that initially a node knows only its own name and input, but its ports are not  labeled with names of the respective neighbors.
To mitigate this,  at the first round, every node sends its name through all the communication ports.
At this very round, nodes receive the names of their respective neighbors coming through communication ports.
This results in every node discovering its neighbors, which they use to assign neighbors' names to ports.
During the first round, nodes do not send their input values. 
After the first round, a node has a snapshot consisting of its own name, the names of neighbors that submitted their names, and edges connecting the node to its newly discovered neighbors.

Starting from the second round, nodes iterate the following per round: send the current snapshot to each neighbor, receive snapshots from the neighbors, and update the snapshot by incorporating newly acquired knowledge.		 
Links may be marked as unreliable, if they have failed at least once to deliver a transmitted message.
Edges representing unreliable links get removed from the snapshot. 
Such a removal is permanent, in that an edge representing an unreliable link is not restored to the snapshot.
A snapshot at a node~$p$ determines a part of a connected component to which $p$ belongs, but a snapshot may not be up to date, as links may fail and there could be a delay in $p$ learning about it.

We say that \emph{a node~$p$ has heard of a node~$q$} if \texttt{name}$_q$ is a vertex in the snapshot at~$p$.
Each node hears of its neighbors by the end of the first round.
A node~$q$ that belongs to the same connected component as a node~$p$, according to the current snapshot at~$p$, and such that $p$ knows the initial input of~$q$, is considered as \emph{settled by~$p$}.

A node~$p$ executing the algorithm participates in exchanging snapshots with neighbors, as long as its connected component contains nodes that have not been settled yet, according to the current snapshot.
At the end of the first round, every node considers all its neighbors as pending settling.
A soon as a node realizes that all nodes in its connected component according to the snapshot are settled,  it sends its snapshot to the neighbors for the last time, finds the maximum input among the nodes in its snapshot, decides on this maximum value, and halts.


\begin{figure}
	
\hrule
	
\FF
	
\noindent
\texttt{Algorithm} \textsc{ES-Agreement}
	
\FF
	
\hrule
	
\FF
	
\begin{enumerate}[nosep]
\item 
\label{early-stopping-initialization}
initialize: 
$\texttt{Nodes} \leftarrow \{\texttt{name}_p \}$,
$\texttt{Inputs} \leftarrow \{(\texttt{name}_p,\texttt{input}_p) \}$,  
$\texttt{Links} \leftarrow \emptyset$,
$\texttt{Unreliable} \leftarrow \emptyset$
		
\item \label{first-round-communication} 
\texttt{for} each port \texttt{do} 
\begin{enumerate}[nosep]
\item
send \texttt{name}$_p$ through this port
\item 
\texttt{if} \texttt{name}$_q$ received  through this port \texttt{then}

\begin{enumerate}[nosep]
\item
assign \texttt{name}$_q$ to the port as a name of the  neighbor 
\item
add \texttt{name}$_q$ to \texttt{Nodes};  
add edge $\{\texttt{name}_p, \texttt{name}_q \}$ to \texttt{Links}
\end{enumerate}
\end{enumerate}
		
\item \label{while_line}
\texttt{while}  \text{there exists an unsettled node in $p$'s connected component in the snapshot}  \texttt{do}
\begin{enumerate}[nosep]
\item[]
\texttt{for} each neighbor $q$  \texttt{do}
\begin{enumerate}[nosep]
\item
send sets \texttt{Nodes}, \texttt{Links}, \texttt{Unreliable}, \texttt{Inputs} to $q$ 

\item
\texttt{if} a message from $q$  was just received \texttt{then}
\begin{enumerate}[nosep]
\item[]
update the sets \texttt{Nodes}, \texttt{Links}, \texttt{Unreliable}, \texttt{Inputs} 

\ \ \ \ \ \ \ \ \ \ by adding new elements included in this message from $q$
\end{enumerate} 
\item[] 
\texttt{else}  add edge $\{\texttt{name}_p, \texttt{name}_q\}$ to \texttt{Unreliable} 

\end{enumerate}
\end{enumerate}	
\item
\label{one-more-time}
\texttt{for} each neighbor $q$  \texttt{do}
send sets \texttt{Nodes}, \texttt{Links}, \texttt{Unreliable}, \texttt{Inputs} to $q$

\item \label{decision_line} 
decide on the maximum input value at the second coordinate of a pair in \texttt{Inputs}
\end{enumerate}
	
\FF
	
\hrule
	
\FF

\caption{\label{fig:early-stopping-algorithm}
A pseudocode for a node~$p$.
A node~$q$ is considered unsettled by $p$ if it is in the same connected component as $p$, according to the snapshot at $p$, and there is no pair of the form $(\texttt{name}_q, ?)$ in \texttt{Inputs}$_p$.}
\end{figure}

\Paragraph{An implementation of the algorithm.}

The algorithm is called \textsc{ES-Agreement}, its pseudocode is given  in Figure~\ref{fig:early-stopping-algorithm}.
The pseudocode refers to a number of variables that we introduce next.
A set variable \texttt{Nodes} at a node~$p$ stores the names of all the nodes that the node~$p$ has  ever learned about, and a set variable \texttt{Links} stores the links known by~$p$ to have transmitted messages successfully at least once, a link is represented as a set of two names of nodes at the endpoints of the link. 
A set variable \texttt{Unreliable} stores the edges representing links known to have failed. 
Knowledge about failures can be acquired in two ways: either directly, when a neighbor is expected to send a message at a round and no message arrives through the link, or indirectly, contained in a snapshot received from a neighbor.
A node stores all known initial input values of nodes $q$ as pairs $(\texttt{name}_q, \texttt{input}_q)$  in a set variable \texttt{Inputs}.
The nodes keep notifying their neighbors of the values of some of their private variables during iterations of the while loop in instruction~\eqref{while_line} in Figure~\ref{fig:early-stopping-algorithm}.
A node iterates this loop until all vertices in the connected component of the node are settled, which is sufficient to decide.
Once a node is ready to decide, it forwards its snapshot to all the neighbors for the last time, decides on the maximum input value in some pair in \texttt{Inputs}, and halts.

An execution of the algorithm starts with each node announcing  its name to all its neighbors, by executing the instruction~\eqref{first-round-communication} in Figure~\ref{fig:early-stopping-algorithm}.
This allows every node to discover its neighbors and map its ports to the neighbors' names.
A node does not send its input in the first round of communication.
A node sends its snapshot to the neighbors for the first time at the second round, by  instruction~\eqref{while_line} in the pseudocode in Figure~\ref{fig:early-stopping-algorithm}.

A node~$p$ has heard of a node~$q$ if \texttt{name}$_q$ is in the set \texttt{Nodes}$_p$.
A node~$p$ has settled node~$q$ once the pair $(\texttt{name}_q, \texttt{input}_q)$ is in \texttt{Inputs}$_p$ and the node~$q$ belongs to the connected component of~$p$ according to its snapshot.
We say that a \emph{node~$p$ knows the state of a node~$q$ at the end of a round~$i$} if the following inclusions hold: $\texttt{Nodes}_q\subseteq \texttt{Nodes}_p$, $\texttt{Links}_q\subseteq \texttt{Links}_p$, $\texttt{Unreliable}_q\subseteq \texttt{Unreliable}_p$, and $\texttt{Inputs}_q\subseteq \texttt{Inputs}_p$, where \texttt{Nodes}$_q$, \texttt{Links}$_q$, \texttt{Unreliable}$_q$, and \texttt{Inputs}$_q$ denote the values of these variables at~$q$ at the end of round~$i$.
If the node~$p$ hears of its neighbor~$q$ at the first round, then $p$ knows only the $q$'s name, but does not know either the input or any neighbor of~$q$ other than oneself.

\Paragraph{The correctness and performance.}

We show that the algorithm is a correct disconnected agreement solution that is early stopping.

\begin{lemma}
\label{lem:know-state-after-round}

Once a node~$p$ settles a node~$q$, then $p$ knows the state of $q$ at the end of the first round.
\end{lemma}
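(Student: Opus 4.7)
The plan is to prove the lemma by tracking the provenance of the distinguished pair $(\texttt{name}_q, \texttt{input}_q)$ through the execution, and arguing that whenever this pair reaches a node, $q$'s end-of-round-1 state has necessarily tagged along. First I would unpack what ``state of $q$ at the end of round~$1$'' means by inspecting instructions~\eqref{early-stopping-initialization} and~\eqref{first-round-communication}: $\texttt{Nodes}_q$ contains $\texttt{name}_q$ together with the names of the round-1 responders among $q$'s ports, $\texttt{Links}_q$ contains exactly the edges $\{\texttt{name}_q,\texttt{name}_r\}$ for those responders $r$, $\texttt{Unreliable}_q=\emptyset$, and $\texttt{Inputs}_q=\{(\texttt{name}_q,\texttt{input}_q)\}$. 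So the lemma reduces to showing that if $p$ has $(\texttt{name}_q,\texttt{input}_q)\in\texttt{Inputs}_p$, then this initial bundle is contained in $p$'s four variables.

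Next I would record two structural observations about the pseudocode in Figure~\ref{fig:early-stopping-algorithm}. \emph{Monotonicity}: the four variables \texttt{Nodes}, \texttt{Links}, \texttt{Unreliable}, \texttt{Inputs} only grow at any node, because the only updates are set unions with elements from received messages (the Unreliable set additionally grows when a message fails to arrive, but no entry is ever deleted). \emph{Atomicity of the first broadcast of $\texttt{input}_q$}: the pair $(\texttt{name}_q,\texttt{input}_q)$ originates at $q$ at initialization, is not transmitted during the first-round name exchange in instruction~\eqref{first-round-communication}, and is first transmitted in round~$2$ during the initial iteration of the while loop~\eqref{while_line}; at that moment $q$ sends the entire quadruple of its round-1 sets to every neighbor. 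Hence the very first message ever carrying $(\texttt{name}_q,\texttt{input}_q)$ out of $q$ simultaneously carries all of $q$'s round-1 state.

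The core argument is then an induction on the round number $k\geq 2$ proving the following invariant: for every node $s$, if $(\texttt{name}_q,\texttt{input}_q)\in\texttt{Inputs}_s$ at the end of round~$k$, then $\texttt{Nodes}_q^{(1)}\subseteq\texttt{Nodes}_s$, $\texttt{Links}_q^{(1)}\subseteq\texttt{Links}_s$, and $\texttt{Inputs}_q^{(1)}\subseteq\texttt{Inputs}_s$ (the $\texttt{Unreliable}$ containment is vacuous). The base case $k=2$ follows from the second observation: the only nodes holding the pair are $q$ itself and those neighbors of $q$ whose link delivered the round-2 message, and by monotonicity the containment persists. For the inductive step, if $s$ first acquires $(\texttt{name}_q,\texttt{input}_q)$ at round $k+1$, it is by receiving a message from some neighbor $r$ that already held this pair at the end of round~$k$; by the induction hypothesis applied to $r$, that message contains $q$'s entire round-1 state, which after union is then contained in $s$'s variables; otherwise $s$ had the pair already, and monotonicity preserves the inclusion.

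The main obstacle is really just bookkeeping: one must verify the claim that $\texttt{input}_q$ cannot reach any other node before round~$2$, and that every update in the while loop is a pure union of the sender's current state with the receiver's state, so that the quadruple $(\texttt{Nodes}_q^{(1)},\texttt{Links}_q^{(1)},\emptyset,\texttt{Inputs}_q^{(1)})$ travels as an indivisible subpackage alongside the pair $(\texttt{name}_q,\texttt{input}_q)$. Once this is pinned down, the conclusion is immediate: since $p$ settles $q$ means $(\texttt{name}_q,\texttt{input}_q)\in\texttt{Inputs}_p$, the invariant gives the four required inclusions that constitute $p$ knowing $q$'s end-of-round-1 state.
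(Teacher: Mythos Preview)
Your proposal is correct and follows essentially the same approach as the paper's proof: both argue that the pair $(\texttt{name}_q,\texttt{input}_q)$ first leaves $q$ only at round~2 bundled with $q$'s entire end-of-round-1 snapshot, and that this bundle propagates intact because the four set variables are only updated by unions. Your version is more explicit---you isolate the monotonicity and atomicity observations and carry out the induction on rounds formally---whereas the paper states the same idea in a single narrative paragraph, but the underlying argument is the same.
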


\begin{proof}
After the first round, the set \texttt{Inputs} at~$q$ includes the only pair $(\texttt{name}_q,\texttt{input}_q)$.
This is because of the initialization in instruction~\eqref{early-stopping-initialization} of the pseudocode in Figure~\ref{fig:early-stopping-algorithm}, 
and since $q$ does not receive the neighbors' inputs at the first round, by instructions~\eqref{first-round-communication}.
Node $q$ learns the names of its neighbors during the first round, which $q$ uses to populate  \texttt{Nodes} and \texttt{Links}.
During the first iteration of the while loop in Figure~\ref{fig:early-stopping-algorithm}, which occurs at the second round, the node~$q$ sends the pair $(\texttt{name}_q,\texttt{input}_q)$ to the neighbors, along with the contents of sets \texttt{Nodes}, \texttt{Links}, and \texttt{Unreliable}, as they were at the end of the first round.
Pair $(\texttt{name}_q,\texttt{input}_q)$ spreads through the network carried in snapshots sent to neighbors, along with the contents of the sets \texttt{Nodes}, \texttt{Links}, and \texttt{Unreliable} at node~$q$.
When a pair $(\texttt{name}_q,\texttt{input}_q)$ reaches a node~$p$ for the first time, the contributions of the sets \texttt{Inputs}$_q$, \texttt{Nodes}$_q$, \texttt{Links}$_q$, and \texttt{Unreliable}$_q$ to the received snapshot are as from these set variables at the end of the first round at~$q$.
\end{proof}

The nodes settle their neighbors by the end of the second round, after the first iteration of the while loop  in Figure~\ref{fig:early-stopping-algorithm}, as long as the links to these neighbors have not failed.
As the while loop iterates in consecutive rounds, once a node~$p$ hears of some node~$q$ at a round~$i$, then the earliest $p$ is expected to settle the node~$q$ is at the next round~$i+1$.
To see this, observe that a node~$p$ hears of its neighbors at the first round and settles them in the second round, unless some links to neighbors failed.
A failure of a link to a neighbor may postpone settling this neighbor, if its input value manages to reach $p$ eventually, or $p$ may never settle this neighbor, if it gets disconnected from~$p$. 
The delay of at least one round between hearing of a node and settling this node is maintained through each iteration of the while loop.
Once node~$p$ settles a node~$q$, it learns the state of~$q$ at the end of the first round, by Lemma~\ref{lem:know-state-after-round}.
Such a state may include the names of the nodes in \texttt{Nodes}$_q$  that are unsettled by~$p$ yet; if this is the case then the node~$p$ continues iterating the while loop.

After each round, a node builds a snapshot as an approximation of the network's topology.
The vertices of this graph are the names of nodes from the set \texttt{Nodes}, and these  links in the set \texttt{Links} that are not in \texttt{Unreliable} make the edges.
We say that a node~$p$ \emph{completes survey} of the network by a round if $p$ has settled all nodes in its connected component according to the snapshot of this round.
A node keeps communicating with neighbors until it completes survey, and then one more time, by instruction~\eqref{one-more-time} in the pseudocode in Figure~\ref{fig:early-stopping-algorithm}.
This extra round of communication serves the purpose to help the neighbors complete their surveys in turn, as they may need the information that has just allowed the sender to complete its survey.
Finally, a node decides on the maximum from the set of all the input values stored in the pairs in \texttt{Inputs}, and halts.

\begin{lemma}
\label{lem:early-stop-agreement}

If a node~$p$ decides on \texttt{input}$_r$, another node~$q$ also decides, and the nodes~$p$ and~$q$ are connected by a reliable path at a round when they have already decided, then $q$ has the node~$r$  as settled in its snapshot at this round.
\end{lemma}

\begin{proof}
Suppose $q$ does not have $r$ in its snapshot as settled at the first round in which both $p$ and~$q$ have completed surveys, to arrive at a contradiction.
The node~$p$ has $q$ in its connected component of the snapshot and, similarly, the node~$q$ has $p$ in its connected component of the snapshot, because they are connected by a reliable path at the first round after completing surveys.
Let $\gamma=(s_1,\ldots, s_k)$ be a reliable path connecting $p=s_1$ with $q=s_k$ at the first round in which both $p$ and $q$ have completed surveys and such that $q$ settled~$p$ by a snapshot that arrived through this path.
Let a node~$s_i$ on the  path~$\gamma$ be such that $i$ is the greatest index $j$ of a node~$s_j$ in $\gamma$ such that $s_j$  settled~$r$ in its snapshot.
An index~$i$ with this property exists because node~$p=s_1$ is such.
Moreover, the inequality $i<k$ holds because $q=s_k$ does not have $r$ settled in its snapshot.
There is a reliable path $\delta=(t_1,\ldots,t_m)$ from~$r=t_1$ to~$s_i=t_m$ through which a snapshot arrived first bringing  $\texttt{input}_r$ to make $s_i$ settle~$r$.
Consider a path $\zeta$ obtained by concatenating $\delta$ with a part of $\gamma$ starting at $s_i$ and ending at~$s_k=q$, where $i<k$.
Let us denote the nodes on this path as $(u_1,\ldots,u_\ell)=\zeta$, where $u_1=r$ and $u_\ell=q$, for reference.

We examine the flow of information along~$\zeta$ from~$r=u_1$ towards~$u_\ell=q$.
At the first round, the node~$u_\ell=q$ learns the name of its neighbor~$u_{\ell-1}$.
At the second round, node~$u_\ell$ settles~$u_{\ell-1}$ and learns of the node~$u_{\ell-2}$, by Lemma~\ref{lem:know-state-after-round}.
In general, a node~$u_j$, such that $j>1$, hears of its neighbor~$u_{j-1}$ at the first round and settles it at the second round.
At a round a node~$u_j$ settles its neighbor~$u_{j-1}$, it also hears of the node~$u_{j-2}$ as still unsettled. 
This creates  a chain of dependencies such that the node~$u_j$ heard of a node up the path $\zeta$ towards~$r$ that is still unsettled and in the same connected component in its snapshot.

As snapshots with \texttt{input}$_r$ move along $\zeta$ towards~$q$, this chain of dependencies, starting at a node that received \texttt{input}$_r$ most recently and ending at~$q$, stays unbroken.
This is because of the following two reasons. 
First, the part of $\zeta$ taken from~$\delta$ provides reliable edges at all times, since \texttt{input}$_r$ manages to reach~$s_i$: the only possibility of  this not being the case would be to settle a node on this path via a different shorter path to $s_i$, but this is a shortest path by its choice. 
Second, the part consisting of $\gamma$ provides reliable edges during the considered rounds, since these edges are still reliable when $q$ completes survey. 
This makes node~$q$ eventually hear of~$r$, and receive \texttt{input}$_r$ at the next round, which is a contradiction.
\end{proof}


\begin{theorem}
\label{thm:ES-Agreement}

Algorithm \textsc{ES-Agreement} is an early stopping solution of disconnected agreement  that relies on minimal knowledge, 
terminates within $\lambda+2$ rounds 
and uses messages carrying $\cO(m\log n)$ bits.
\end{theorem}

\begin{proof} 
A node decides on an input value from its snapshot, which gives validity.

We show agreement as follows. 
Consider two nodes $p$ and $q$ that are connected by a reliable path at the first round when each of these nodes has already decided.
Suppose, to arrive at a contradiction, that node~$p$ decided on a value that is greater than the value that node~$q$ decided on.
Let $r$ be the node that provided its \texttt{input}$_r$ as the decision value for~$p$.
By Lemma~\ref{lem:early-stop-agreement}, node~$q$ has node~$r$ settled at the round of deciding, so $q$ decides on a value that is at least as large as \texttt{input}$_r$, which is a contradiction.

Next, we estimate the number of rounds needed for each node to halt.
As an execution proceeds, information flows through the connected components of the network, by iteratively sending a snapshot to the neighbors and updating it at the same round.
A value that gets decided on, in a particular connected component, may travel along a path that shares its parts with multiple connected components. 
If such a path crosses a connected component then its length is upper bounded by the connected component's diameter.
If a link on such a path fails, this may occur after the future decision value got transmitted through this link, and the endpoints of this link may belong to different connected components.
This shows that the number of hops a decision value makes on its way between a pair of nodes it at most the stretch.
There are only two rounds that cannot be accounted for by this counting: the first round, during which the nodes discover their neighbors, and the last round,  when a node notifies its neighbors of its snapshot for the last time.
So if an execution terminates at a round~$t$, then the stretch at this round is at least $t-2$. 
It follows that the algorithm terminates by the round $\lambda+2$. 
\end{proof}

\section{Optimizing Link Use}

\label{sec:link-use}

We present an algorithm solving disconnected agreement that uses the optimal number $\cO(n)$ of links and messages of $\cO(m\log n)$ bits.
Optimizing the link use in order to achieve an algorithmic goal could be interpreted as relying on a network backbone to accomplish the task and building such a backbone on the fly. 
We depart from the model of minimal knowledge of the previous sections and assume that nodes know their neighbors at the outset, in having names of the corresponding neighbors associated with all their ports.
We identify links, determined by the ports of a node, by the names of the respective neighbors of the node, and use the terms ports and incident links interchangeably.
The disconnected agreement algorithm we describe next makes nodes halt in $\cO(n m)$ rounds. 
We complement the algorithm by showing that using $\cO(n)$ links is only possible when each node starts with a mapping of ports on its neighbors, because otherwise $\Omega(m)$ is a lower bound on the link use.
We also show that no algorithm can simultaneously be early stopping and use $\cO(n)$ links.

\Paragraph{An overview of the algorithm.}

The general idea of the algorithm is to have nodes build their maps of the network, representing the topology, that include the connected component of each node.
In this the algorithm resembles \textsc{ES-Agreement}.
An approximation of the map at a node evolves through a sequence of snapshots of the vicinity of the node.
Such a snapshot helps to coordinate choosing links through which messages are sent to extend the current snapshot to a bigger one.
We want to accomplish manipulating snapshots by sending messages through as few links as possible, meaning $\cO(n)$ links.
This is possible in principle, because a spanning forest has $\cO(n)$ links, and at the start each node already knows its neighbors.
Input values could be a part of node attributes of the vertices on such a map.
After the process of drawing a map is completed, which includes identifying a connected component, a decision can be made based on the information included in the map.

A node categorizes its incident links as either passive, active or unreliable; these are exclusive categories that evolve in time. 
An \emph{active} link is used to send messages through it, so a node categorizes an incident link as active once it receives a message through it.
Initially, one link incident to a node is considered as active by the node, and all the remaining incident links are considered passive.
A link is \emph{passive} at a round if none of its endpoint nodes has ever attempted a transmission through this link.
A node transmits through an active port at every round, unless the node decides and halts.
It follows that if a node~$p$ considers a link active, which connects it to a neighbor~$q$, then $q$ considers the link active as well, possibly with a delay of one round.
Similarly,  if a node~$p$ considers a link passive, which connects it to a neighbor~$q$, then $q$ considers the link passive as well, possibly for one round longer than~$p$.
A node~$p$ detects a failure of an active link and begins to consider it unreliable after the link fails to deliver a  message to~$p$ as it should.
For an active link connecting a node $p$ with $q$, once $p$ considers the link unreliable then $q$ considers the link unreliable as well, possibly with a delay of one round.

The \emph{state of a node~$p$ at a round} consists of its name, the input value, and a set of its neighbors, with each incident link categorized as either passive, active, or unreliable, representing this categorization of links by the node~$p$ at the round.
The states of a node may evolve in time, in that an incident link may change its categorization.
Links start as passive, except for one incident link per node initialized as active, then they  may become active, and finally they may become unreliable. 

A snapshot of the network at a node represents the node's knowledge of its connected component in the network restricted to the active edges and the  states of its nodes.
Formally, a \emph{snapshot of network} at a node~$p$ at a round is a collection of states of some nodes that $p$ has received and stores.
A snapshot allows to create a map of a portion of the network, which is a graph with the names of nodes as vertices and the edges representing links. 
This map can include the input values of some nodes, should they become known.
A connected component of a node with other nodes reachable by active links is a part of such a map. 
Formally, the \emph{active connected component} of a node~$p$ at a round is a connected component, of the vertex representing $p$, in a graph that is a map of the network according to the snapshot of $p$ at the round with only active links represented by edges.

A node~$p$ sends a summary of its knowledge of the states of nodes in the network to the neighbors through all its active links at each round. 
If $p$ receives a message with such knowledge from a neighbor, then $p$ updates its knowledge and the snapshot by incorporating the newly learned information. 
Such new information may include either a state of a node~$q$, such that $p$ has never had a  state of~$q$, or a subsequent state of node~$q$, such that $p$ has already had some state of~$q$. 
At each round, a node~$p$ determines its active connected component based on the current snapshot.
If we refer to an active connected component of~$p$ then this means the active connected component according to the current snapshot.
We say that \emph{a node~$p$ has heard of a node~$q$} if the \texttt{name}$_q$ occurs in  the snapshot at~$p$; the node~$p$ may either store some $q$'s state or $q$'s name may belong to a state of some other node that $p$ stores.
A node~$p$ considers another node~$q$ \emph{settled} if $p$ has $q$'s state in its snapshot.
A node~$p$ considers its active connected component \emph{settled} if $p$ has settled all the nodes in its active connected component.

If a node~$p$ has heard about another node~$q$ such that $q$ does not belong to the node~$p$'s active connected component, but it is connected to a node~$r$ in the active connected component by a passive link, then the node~$p$ considers the link connecting $q$ to~$r$ as \emph{outgoing}.
If there is an outgoing link in $p$'s active connected component then $p$ considers its active connected component \emph{extendible}, otherwise $p$ considers its active connected component \emph{enclosed}.

We want the nodes to participate in making some outgoing links active, each time the active connected component is settled and still extendible.
All the nodes in the active connected component of a node~$p$ can choose the same outgoing link to make it active, once the active connected component becomes settled, because  each node knows the same set of outgoing links.
Once $p$'s active connected component becomes settled and enclosed then $p$ may decide.


\begin{figure}[t]
\hrule
\FF
\texttt{algorithm} \textsc{OL-Agreement} 
\FF
\hrule
\FF
\begin{enumerate}[nosep]
\item \label{OL:initialization}	
initialize:
$\texttt{Unreliable} \leftarrow \emptyset$, 
$\texttt{Active} \leftarrow \{ \{ p,q\}\}$ where $q$ is some neighbor, 

$\texttt{Passive} \leftarrow$ set of links to $p$'s neighbors, except for the neighbor $q$ used in  \texttt{Active},

$\texttt{state} \leftarrow (\texttt{name}_p,\texttt{input}_p, \texttt{Active},\texttt{Passive},\texttt{Unreliable})$, 

$\texttt{round} \leftarrow 0$, 
$\texttt{timestamp} \leftarrow(\texttt{state},\texttt{round})$

\item  \label{OL:outer-repeat-loop}
\texttt{repeat}
\begin{enumerate}[nosep]
\item \label{OL:start-epoch}
$\texttt{epoch} \leftarrow \texttt{round}$, 
$\texttt{Snapshot} \leftarrow \{\texttt{state} \}$
\item 
\texttt{repeat}  \label{OL:inner-repeat-loop}
\begin{enumerate}[nosep]
\item \label{OL:round-increment}
$\texttt{round} \leftarrow \texttt{round} + 1$,
add \texttt{timestamp} to set \texttt{Timestamps} 
\item \label{OL:for-each-port-do}
\texttt{for} each incident link $\alpha$ \texttt{do}
\begin{enumerate}[nosep]
\item
\texttt{if} $\alpha$ is in \texttt{Active} \texttt{then} send \texttt{Timestamps}  through $\alpha$
\item 
\texttt{if} $\alpha$ is mature in \texttt{Active} and no message received through $\alpha$ 
\begin{itemize}[nosep]
\item[]
\texttt{then} move $\alpha$ to \texttt{Unreliable}
\end{itemize}
\item 
\texttt{if} a message received through $\alpha$ \texttt{then} place $\alpha$ in  \texttt{Active}
\end{enumerate}
\item 
\texttt{for} each received timestamp pair  $(\texttt{state},y)$ \texttt{do}
\begin{enumerate}[nosep]
\item
add $(\texttt{state},y)$ to \texttt{Timestamps}
\item \label{OL:add-state-snapshot}
\texttt{if} $y> \texttt{epoch}$ \texttt{then} add $\texttt{state}$ to \texttt{Snapshot}
\end{enumerate}
\end{enumerate}	
\item \label{OL:inner-until-condition}
\texttt{until} the active connected component is settled 
\item \label{OL:add-connector}
if the active connected component is extendible then
\begin{enumerate}[nosep]
\item
identify an outgoing edge as a connector
\item 
if the connector is incident to $p$ then place it in \texttt{Active}
\end{enumerate}
\end{enumerate}
\item \label{OL:outer-until-condition}
\texttt{until} the active connected component is enclosed
\item  \label{OL:decision-value}
set $\texttt{candidate}_p$ to the maximum input value in \texttt{Snapshot}
\item \label{OL:notify-decision}
send pair $(\textrm{this-is-decision},\texttt{candidate}_p)$ through each active incident link
\item \label{OL:decision} 
decide on \texttt{candidate}$_p$ 
\end{enumerate}
\FF
\hrule
\caption{\label{fig:OL-agreement}
A pseudocode for a node~$p$.
In each round, node~$p$ checks to see if a pair of the form $(\texttt{decision},z)$ has been  received, and if so then $p$ forwards this pair through each active port, decides on~$z$, and halts. }
 \end{figure}

\Paragraph{An implementation of the algorithm.}

The algorithm is called \textsc{OL-Agreement}, its pseudocode  is in Figure~\ref{fig:OL-agreement}.
Each node stores links it knows as unreliable in a set \texttt{Unreliable}, initialized to the empty set. 
Each node stores links it considers active in a set \texttt{Active}, initialized to some incident link. 
Each node  stores passive links in a set \texttt{Passive}, which a node initializes to the set of all incident links except for the one initially activated link. 

All nodes maintain a variable \texttt{round} as a counter of rounds.
In each round, a node creates a \emph{timestamp pair}, which consists of its current state and the value of the round counter used as a timestamp.
A node~$p$ stores timestamp pairs in a set \texttt{Timestamps}. 
For each node~$q$ different from~$p$, a node~$p$ stores a timestamp pair for~$q$ if such a pair arrived in messages and only one pair with the largest timestamp.
These variables are initialized by instruction~\eqref{OL:initialization} in Figure~\ref{fig:OL-agreement}.

The initialization is followed by iterating a loop performed by instruction~\eqref{OL:outer-repeat-loop} in the pseudocode in Figure~\ref{fig:OL-agreement}.
The purpose of an iteration is to identify a new settled active connected component; we call an iteration \emph{epoch}.
An epoch is determined by the round in which it started, remembered in the variable~\texttt{epoch} by instruction~\eqref{OL:start-epoch}.
The knowledge of an active connected component of a node~$p$ identified in an epoch is stored in a set \texttt{Snapshot}, which is initialized at the outset of an epoch to the $p$'s state by instruction~\eqref{OL:start-epoch}.
This knowledge is represented as a collection of states of nodes that arrived to $p$ in timestamp pairs, with timestamps indicating that they were created after the start of the current epoch, as verified by instruction~\eqref{OL:add-state-snapshot}.
The main part of an epoch is implemented as an inner repeat loop~\eqref{OL:inner-repeat-loop}.
An iteration of this loop implements a round of communication with neighbors through active links and updating the state by instruction~\eqref{OL:for-each-port-do}.
An incident link in \texttt{Active} is \emph{mature} if either it became active because a message arrived through it or $p$ made it active spontaneously at some round $i$ and the current round is at least $i+2$.
If a mature active link fails to deliver a message then $p$ moves it to \texttt{Unreliable}.

A set variable \texttt{Timestamps} stores timestamp pairs that a node sends in each message and updates after receiving messages at a round.
A set variable \texttt{Snapshot} is used to construct an active connected component.
\texttt{Snapshot} is rebuilt in each epoch, starting only with the current $p$'s state.
We separate storing timestamp pairs in a set \texttt{Timestamps} used for communication from storing states in \texttt{Snapshot} to build an active connected component, to facilitate a proper advancement of epochs in other nodes.

We say that node~$p$ \emph{completes the survey} of the network by a round if $p$ has settled all the nodes in its active connected component according to the snapshot of this round.
The inner repeat loop~\eqref{OL:inner-repeat-loop} terminates once $p$ completes the survey of the network, by condition~\eqref{OL:inner-until-condition} controlling the loop.
If the active connected component is extendible, then $p$ identifies a \emph{connector} which is an outgoing edge to be made active.
We may identify an outgoing edge that is minimal with respect to the lexicographic order among all the outgoing links for a settled active connected component to be designated as a connector.
If a connector is a link incident to $p$ then $p$ moves it to the set \texttt{Active}, by instruction~\eqref{OL:add-connector}.

Once an epoch ends, by condition~\eqref{OL:inner-until-condition}, and the active connected component is enclosed, then the main repeat loop ends, by condition~\eqref{OL:outer-until-condition} controlling the loop.
At this point, a node~$p$ is ready to decide, and the decision is on the maximum input value in a state stored in \texttt{Snapshot}, by instruction~\eqref{OL:decision-value}.
Node $p$ notifies each neighbor connected by an active link of the decision value, by instruction~\eqref{OL:notify-decision}, and decides, by instruction~\eqref{OL:decision}.
The pseudocode in Figure~\ref{fig:OL-agreement} omits what pertains to handling messages that could be generated in round~\eqref{OL:notify-decision} by some nodes.
Namely, as a first thing at a round, a node~$p$ verifies if a pair of the form $(\texttt{decision},z)$ has been  received in the previous round, and if so then $p$ forwards this pair through all active ports, decides on this value~$z$, and halts.

\Paragraph{The correctness and performance bounds.}

We show that algorithm \textsc{OL-Agreement} is a correct solution to disconnected agreement, and estimate its performance.
The algorithm uses a similar paradigm to survey connected components as algorithm \textsc{ES-Agreement}.
A node first learns of other nodes' names and later of these nodes' input values.
This creates a chain of dependencies along paths through connected components, which in the case of algorithm \textsc{OL-Agreement} consist of active links.

\begin{lemma}
\label{lem:OL-components}

If two nodes $p$ and $q$ are connected by a reliable path just after they both decided then each of them counts the other node in its last active connected component.
\end{lemma}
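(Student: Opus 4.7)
The plan is to argue along a reliable path by contradiction. By symmetry, it suffices to prove that $q$ lies in $p$'s last active connected component; the argument with the roles of $p$ and $q$ exchanged handles the other direction. Fix a reliable path $P = (p = v_0, v_1, \ldots, v_k = q)$ that exists just after both nodes have decided, and let $i$ be the largest index such that $v_i$ belongs to $p$'s last active connected component. I aim to prove $i = k$.

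Suppose toward a contradiction that $i < k$. Because $p$ has decided, its terminating epoch ended with an active connected component that is both settled and enclosed, so by the definition of settled $p$ holds in its final \texttt{Snapshot} some state $\sigma$ of $v_i$. In $\sigma$, the link $\{v_i, v_{i+1}\}$ is categorized as active, passive, or unreliable. I would then split into these three cases.

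For the unreliable case, I would first establish as an auxiliary invariant that no reliable link is ever placed in a node's \texttt{Unreliable} set. By the rule in instruction~\eqref{OL:for-each-port-do}, a link moves to \texttt{Unreliable} only after it becomes mature in \texttt{Active}; and once either endpoint places the link in \texttt{Active}, reliability guarantees that the spontaneous-activation message is delivered, so within one round the other endpoint also places the link in its own \texttt{Active} and begins sending through it, whence the mature check at the first endpoint succeeds and the link is never demoted. For the active case, the snapshot graph contains $\{v_i, v_{i+1}\}$ as an active edge, so $v_{i+1}$ is a vertex of $p$'s active connected component; since that component is settled, $v_{i+1}$ is settled too, contradicting the maximality of $i$. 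For the passive case, $v_{i+1}$ appears among $v_i$'s neighbors in $\sigma$ (since a state lists every incident link), so $p$ has heard of $v_{i+1}$; as $v_{i+1}$ lies outside the active connected component, the link $\{v_i, v_{i+1}\}$ is an outgoing link and the component is extendible rather than enclosed, contradicting the termination condition~\eqref{OL:outer-until-condition} of the main repeat loop.

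The main obstacle will be the unreliable subcase, since it requires a careful argument about the two-round synchronization window between neighbors when a link is spontaneously activated (e.g.\ as a connector at the end of an epoch) and about the asymmetric maturity rule. Once that invariant is isolated, the active and passive subcases reduce to reading off the definitions of \emph{settled} and \emph{enclosed} from the overview of the algorithm, and the lemma follows by applying the whole argument symmetrically.
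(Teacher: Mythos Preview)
Your proposal is correct and follows essentially the same line as the paper's proof: both argue that every link on the reliable path is categorized as active or passive (never unreliable, since once a link is activated a message traverses it every round and any failure would be detected), and that in either remaining case the two endpoints lie in the same last active connected component (trivially for active edges; for passive edges because otherwise the link would be outgoing and the component extendible rather than enclosed). The paper's version is terser and argues directly about each link rather than walking to a first point of exit, but the content is the same.
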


\begin{proof}
Let us consider an arbitrary reliable path from $p$ to~$q$.
Each link on this path is either active or passive.
This is because once a passive link becomes active, then a message is sent trough it in each round, so a failure is immediately detected.
A passive link is never activated only if it already connects two nodes in the same active connected component.
An active link makes its endpoints belong to the same active connected component.
\end{proof}

The following Lemma~\ref{lem:OL-agreement} is analogous to Lemma~\ref{lem:early-stop-agreement} about algorithm \textsc{ES-Agreement}, which also uses sufficiently large messages to build a map approximating the network.
We need it to show agreement.
A proof could be structured similarly to that of Lemma~\ref{lem:early-stop-agreement}; we include a detailed argument for the sake of completeness.

\begin{lemma}
\label{lem:OL-agreement}

If a node~$p$ decides on \texttt{input}$_r$, another node~$q$ also decides, and nodes~$p$ and~$q$ are connected by a reliable path at a round when they have already decided, then $q$ has node~$r$ as settled in its snapshot at this round.
\end{lemma}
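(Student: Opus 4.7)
The plan is to follow the template of the proof of Lemma~\ref{lem:early-stop-agreement}, arguing by contradiction that if $q$ decides without $r$ settled in its snapshot, then a chain of dependencies along an active path from $r$ to $q$ forces $q$'s active connected component to remain unsettled at the round $q$ decides, preventing $q$ from exiting the inner repeat loop~\eqref{OL:inner-repeat-loop}.

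First, I would observe that since $p$ decided on $\texttt{input}_r$ via instruction~\eqref{OL:decision-value}, the state of $r$ lies in $p$'s \texttt{Snapshot} at the end of $p$'s last epoch, so $r$ belongs to $p$'s active connected component at the round of $p$'s deciding. By Lemma~\ref{lem:OL-components}, nodes $p$ and $q$ share a common active connected component at the first round when both have decided; in particular, there is a path $\gamma=(s_1,\ldots,s_k)$ of active links from $p=s_1$ to $q=s_k$ by which $q$ received $p$'s state before completing its final epoch.

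Next, I would locate the largest index $i$ such that $s_i$ has $r$ settled in its \texttt{Snapshot}; such an index exists because $s_1=p$ qualifies, and the assumption that $r$ is not settled at $q$ gives $i<k$. I would then pick a path $\delta=(t_1,\ldots,t_m)$ of active links along which the state of $r=t_1$ first reached $s_i=t_m$; by instruction~\eqref{OL:add-state-snapshot} and the rule that timestamp pairs travel only through active links, every edge of $\delta$ was active while the state of $r$ crossed it. Concatenating $\delta$ with the tail of $\gamma$ from $s_i$ to $q$ produces a path $\zeta=(u_1,\ldots,u_\ell)$ with $u_1=r$ and $u_\ell=q$ whose edges are active during the relevant rounds. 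I would then track the chain of dependencies along $\zeta$ exactly as in the proof of Lemma~\ref{lem:early-stop-agreement}: once the state of $r$ has reached $u_j$ but not yet $u_{j+1}$, at the next round $u_j$ transmits it through the active link to $u_{j+1}$, and simultaneously $u_{j+1}$ hears of a node on $\zeta$ closer to $r$ that it has not yet settled. This keeps $u_{j+1}$'s active connected component unsettled and forces another iteration of loop~\eqref{OL:inner-repeat-loop}. Propagating this argument along $\zeta$ down to $q$ shows that $q$ must still be iterating loop~\eqref{OL:inner-repeat-loop} when the state of $r$ is in transit, contradicting the assumption that $q$ has already decided.

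The main obstacle will be making precise the claim that the links of $\zeta$ remain active, and hence reliable, throughout the rounds during which the state of $r$ flows forward. The portion coming from $\delta$ is active by the choice of $\delta$, and the monotonicity built into the algorithm guarantees that once a link becomes active it is either used continuously or detected as unreliable by both endpoints with a delay of at most one round; the portion taken from $\gamma$ consists of active links that are still reliable at the round $q$ decides, and therefore were reliable during the preceding relevant rounds as well. Assembling these two observations so that a single uninterrupted chain of dependencies can be traced from the moment the state of $r$ arrives at $s_i$ all the way to $q$ is the delicate step, and I would handle it by invoking the near-symmetric update of link categorizations between adjacent endpoints established in the overview of the algorithm.
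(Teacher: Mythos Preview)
Your proposal is correct and mirrors the paper's own proof essentially step for step: the contradiction setup, the appeal to Lemma~\ref{lem:OL-components}, the construction of the active paths $\gamma$, $\delta$, and their concatenation~$\zeta$, and the chain-of-dependencies argument along~$\zeta$ are all exactly what the paper does. The only cosmetic difference is that the paper phrases the dependency chain from $q$'s viewpoint (what $q$ hears of and settles round by round) whereas you phrase it from the viewpoint of $r$'s state flowing forward, but these are the same argument.
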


\begin{proof}
Suppose $q$ does not have $r$ in its snapshot as settled at the first round in which both $p$ and~$q$ have completed their surveys, to arrive at a contradiction.
The nodes $p$ and $q$ have each other in their active connected components, by Lemma~\ref{lem:OL-components}.
Let $\gamma=(s_1,\ldots, s_k)$ be a path consisting of the active links that connect $p=s_1$ with $q=s_k$ just after both $p$ and $q$ have completed their surveys and such that $q$ settled~$p$ by a chain of \texttt{Timestamps} that arrived through this path.
Let a node~$s_i$ on the  path~$\gamma$ be such that $i$ is the greatest index $j$ of a node~$s_j$ in~$\gamma$ that has settled~$r$ in its snapshot.
Such an index~$i$ exists because the node~$p=s_1$ has this property.
The inequality $i<k$ holds because $q=s_k$ does not have $r$ settled in its snapshot.
There is a path $\delta=(t_1,\ldots,t_m)$ from~$r=t_1$ to~$s_i=t_m$ consisting of active links through which \texttt{Timestamps} arrived first bringing  $\texttt{state}_r$ to enable $s_i$ to settle~$r$.
Consider a path $\zeta$ consisting of the active links obtained by concatenating $\delta$ with a part of $\gamma$ starting at $s_i$ and ending at~$s_k=q$, where $i<k$.
Let us denote the nodes on this path as $(u_1,\ldots,u_\ell)=\zeta$, where $u_1=r$ and $u_\ell=q$.

At the first round, the node~$u_\ell=q$ learns a state of its neighbor~$u_{\ell-1}$.
At the second round, the node~$u_\ell$ settles~$u_{\ell-1}$ and hears of the node~$u_{\ell-2}$.
In general, a node~$u_j$, such that $j>1$, hears of its neighbor~$u_{j-1}$ at the first round and settles it at the second round.
If a node~$u_j$ settles its neighbor~$u_{j-1}$ at a round then it also hears of the node~$u_{j-2}$ as still unsettled. 
This creates  a chain of dependencies such that the node~$u_j$ heard of a node up the path~$\zeta$ towards~$r$ that is still unsettled and in the same connected component in its snapshot.

As \texttt{Timestamps} with \texttt{state}$_r$ move along $\zeta$ towards~$q$, this chain of dependencies, starting at a node that received \texttt{state}$_r$ most recently and ending at~$q$, stays unbroken.
This is because of the following two reasons. 
First, the part of $\zeta$ taken from~$\delta$ provides active edges at all times, since \texttt{state}$_r$ manages to reach~$s_i$: the only possibility of  this not being the case would be to settle a node on this path via a different shorter path to~$s_i$, but this is a shortest path by its choice. 
Second, the part consisting of $\gamma$ provides active edges during the considered rounds, since these edges are still reliable when $q$ completes the survey. 
This makes node~$q$ eventually hear of~$r$, and receive \texttt{state}$_r$ at the next round, which is a contradiction.
\end{proof}

We consider an auxiliary \emph{activation process} on a graph that models the activation of connectors in an execution of algorithm \textsc{OL-Agreement}.
The process acts on a given simple graph $H$ that has some~$k$ vertices and proceeds through consecutive rounds.
An edge of $H$ may progress through three states, first passive, then possibly active, and then possibly be deleted.
We consider subgraphs determined by active edges, so  connected components are active connected components.
An edge connecting a vertex in a connected component $C$ to a vertex in a different connected component is considered as outgoing from~$C$.
In the beginning of a round, if an active connected component has an outgoing edge, then one such an edge is made active.
At the end of a round, some active edges may be deleted.
The process continues until there are no outgoing edges.

\begin{lemma}
\label{lem:OL-activation-process}

In the course of the activation process on a graph with $k$ vertices, the number of active edges in the graph at any round is at most $2k-2$.
\end{lemma}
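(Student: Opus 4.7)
The plan is to prove the lemma by maintaining the stronger per-component invariant: at every round, each active connected component $C$ with $k_C$ vertices contains at most $2(k_C-1)$ active edges. Summing this over the $c$ active components yields $|A| = \sum_C a_C \le \sum_C 2(k_C-1) = 2(k-c) \le 2(k-1) = 2k-2$, which is the desired bound. I would proceed by induction on the round number, with the base case (singletons, $k_C=1$, $a_C=0$) being immediate.

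For the addition half of the inductive step, I would consider a super-component $C^\star$ formed in one round by merging $s$ old components $C_1,\ldots,C_s$ through activated edges. Because each old component picks at most one outgoing edge (the ``one such edge is made active'' rule), the set of new edges forms a graph on the $s$ super-vertices in which each super-vertex has out-degree at most $1$; in particular at most $s$ new edges are contributed to $C^\star$. Combining this with the inductive hypothesis $a_{C_i}\le 2(k_{C_i}-1)$ gives
\[
a_{C^\star} \;\le\; \sum_{i=1}^{s} a_{C_i} + s \;\le\; \sum_{i=1}^{s} 2(k_{C_i}-1)+s \;=\; 2 k_{C^\star} - s \;\le\; 2(k_{C^\star}-1),
\]
whenever $s\ge 2$; the case $s=1$ leaves the component unchanged and the bound trivially holds.

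For the deletion half, removing a non-bridge edge strictly decreases $a_C$ while leaving $k_C$ unchanged, so the per-component bound is preserved. The main obstacle is the bridge-deletion case: if a bridge of a component $C$ with $a_C = 2(k_C-1)$ is deleted, $C$ splits into $C'$ and $C''$ with $a_{C'}+a_{C''} = a_C-1 = 2k_C - 3$, which exceeds $2(k_{C'}-1)+2(k_{C''}-1) = 2k_C-4$ by one, so the per-component bound can fail individually in one of the two parts. I expect the hard part to be ruling out this scenario; my intended approach is to show that components that reach the tight bound $a_C=2(k_C-1)$ in the activation process must be $2$-edge-connected, so they contain no bridges available for deletion. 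This would follow from a structural analysis of how a component is built: the invariant's tightness can only be achieved when every ``extra'' edge added in some merge round closes a cycle, and accumulating enough such extras to saturate the bound forces every edge of the component to lie on some cycle.

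If making this $2$-edge-connectivity claim rigorous proves unwieldy, I would fall back to a global amortization using the potential $\Phi = a + 2c - 2k$, which starts at $0$ and decreases by at least $1$ for each tree-edge addition, does not increase on extra-edge additions (since each cycle-creating extra sits in a super-component whose $s\ge 2$ mergers pay for it), and increases by at most $1$ per bridge deletion. The deletions would then be offset by pairing every bridge with the earlier tree-edge merger that created it, so the accumulated potential stays bounded by $0$, giving $a \le 2k - 2c \le 2k-2$.
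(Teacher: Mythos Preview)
Your per-component invariant and your treatment of the activation step are correct and cleaner than the paper's brief global count. The gap is in the bridge-deletion step, and it is not where you locate it.

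You frame the obstacle as arising only when the parent component $C$ is tight, $a_C = 2(k_C-1)$, and propose to dispose of it by showing that tight components are $2$-edge-connected. But the per-component bound can fail on one side of a bridge deletion even when $C$ is \emph{not} tight. If $a_C = 2k_C - 3$, then $a_{C'}+a_{C''} = 2k_C - 4 = 2(k_{C'}-1)+2(k_{C''}-1)$, so the sum is exactly the total budget; nothing you have written prevents $a_{C'} = 2k_{C'}-1 > 2(k_{C'}-1)$ individually while $a_{C''} = 2k_{C''}-3$ compensates. Your $2$-edge-connectivity claim, even if established, says nothing about such non-tight $C$, which do carry bridges. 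To make the per-component route go through you would need a hereditary statement such as ``every $2$-edge-connected block $B$ of every component satisfies $a_B \le 2(k_B-1)$ at all times''; maintaining that through merges is the real work and is not what you sketch. Your fallback also breaks: the pairing ``every deleted bridge $\leftrightarrow$ the tree-edge merger that created it'' fails because an edge that is a bridge when deleted need not have been a tree edge when activated. Three singletons activating a triangle in one round give two tree edges and one extra; deleting one of the tree edges (a non-bridge) leaves a $2$-path in which the former extra is now a bridge, and deleting it raises $\Phi$ with no tree-edge partner to charge.

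For comparison, the paper does not attempt a per-component invariant at all. Its argument is a short global count: processing one round's activations sequentially, each ``extra'' (cycle-closing) activation is charged to a merger in the same chain, extras are maximized when chains have length two, and since at most $k-1$ mergers can ever occur, the active-edge count never exceeds $2(k-1)$. Deletions are treated only implicitly, as operations that remove edges and hence cannot push the count above the bound. Your finer invariant would be a nice strengthening, but closing it requires more than the tight-component observation or the pairing you propose.
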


\begin{proof}
We start with no active edges, so each vertex is an active connected component and $k$ is the number of such components.
Consider the activation of new edges at a round as occurring sequentially.
As an edge is activated, it may connect two different active connected components, thus decreasing their number, or it may close a cycle.
Suppose an active connected component~$C_1$ gets connected to an active connected component~$C_2$, then $C_2$ to $C_3$, and so on through~$C_i$, with the edge activated in~$C_i$ connecting $C_i$ to some~$C_j$, for $1\le j< i$.
That last edge from a vertex in~$C_i$ to a vertex in~$C_j$ is not needed to make all~$C_i$, for $1\le j\le i$, into one connected component, so we treat it as an \emph{extra} edge.
A number of such extra edges created at a round is maximized when $i=2$, because  for each decrease of the number of connected components with one activated edge we also activate another edge.
A tree minimizes the number of connected components to one, and a tree on $k$ vertices has $k-1$ edges.
We obtain that the number of active edges at each round is at most twice the number of edges in a tree of $k$ vertices, which is $2(k-1)$.
\end{proof}

\begin{theorem}
\label{thm:OL-Agreement}

Algorithm \textsc{OL-Agreement} solves disconnected agreement  in $\cO(n m)$ rounds with fewer than $2n$  links used at any round and sending messages of $\cO(m\log n)$ bits. 
\end{theorem}

\begin{proof} 
A node decides on an input value from a state in its snapshot.
All states include original input values, which gives validity.

Consider two nodes $p$ and $q$ connected by a reliable path at the first round when each of these nodes has already decided.
We want to show that $p$ and $q$ decide on the same value.
Suppose, to arrive at a contradiction, that the node~$p$ has decided on a value that is greater than the value that the node~$q$ has decided on.
Let $r$ be the node whose state provided its input value as the decision value for~$p$.
By Lemma~\ref{lem:OL-agreement}, the node~$q$ has the node~$r$ settled at the round of deciding, so $q$ decides on a value that is at least as large as \texttt{input}$_r$, which is a contradiction.
This gives agreement.

As an execution proceeds, information flows through each active connected component, by iteratively sending timestamp pairs to neighbors via active links and simultaneously updating the latest states in timestamp pairs.
The length of a path of active links traversed by a timestamp pair may be as long as the number of nodes in an active connected component minus one.
This means that an epoch takes fewer than~$n$ rounds.
After a new connector is added, it takes the length of an epoch for all the nodes to settle on the states of the nodes in an active connected component.
There may be up to $m$ links added as connectors.
This means that every node halts in $\cO(n m)$ rounds.

The process of making links active could be modeled as the activation process on a graph representing the network.
By Lemma~\ref{lem:OL-activation-process},  the number of links that are active at the same time is less than~$2n$.
\end{proof}

\Paragraph{Lower bounds for link usage.}

We now consider a setting in which the destinations of ports are not initially known to nodes.
For any positive integers $n$ and $m$ such that $m=\cO(n^2)$, we design a graph $\cG(n,m)$ with $\Theta(n)$ vertices and $\Theta(m)$ edges, which makes any disconnected agreement solution to use $\Theta(m)$ links even if the nodes know the parameters $n$ and $m$.
We drop the parameters $n$ and $m$ from the notation $\cG(n,m)$, whenever they are fixed and understood from context, and simply use~$\cG$. 

Consider any positive integers $n$ and $m$ such that $m=\cO(n^2)$.
Let graph~$\cG$ consist of two identical parts $G_{1}$ and~$G_{2}$ as its subgraphs. 
The parts are $\left \lceil{\frac{m}{n}}\right \rceil$-regular graphs of $\left \lceil{\frac{n}{2}}\right \rceil$ vertices each. 
Without loss of generality, we can assume that the number $\left \lceil{\frac{m}{n}}\right \rceil$ is even, to guarantee that such regular graphs exist. 
Graph~$\cG$ is obtained by connecting $G_1$ and $G_2$ with $\left \lceil{\frac{n}{2}}\right \rceil$ edges such that each vertex from~$G_{1}$ has exactly one neighbor in~$G_{2}$.

By the construction, graph~$\cG$ has $2 \left \lceil{\frac{n}{2}}\right \rceil = \Theta(n)$ vertices and $(\left \lceil{\frac{m}{n}}\right \rceil + 1) \left \lceil{\frac{n}{2}}\right \rceil  = \Theta(m)$ edges.
Let us assume now that the destinations of outgoing links are not initially known to the nodes. 
This means that ports can be associated with neighbors's names only after receiving messages through them.
The following Theorem~\ref{thm:link-use} holds even if  $n$ and $m$ can be a part of code.


\begin{theorem}
\label{thm:link-use}

For any disconnected agreement algorithm $\cA$ relying on minimal knowledge, and positive integer numbers $n$ and $m$ such that $n \le m$ and $m\le n^2$,  there exists a network $\cG(n,m)$ with $\Theta(n)$ nodes and $\Theta(m)$ links and an execution of algorithm $\cA$ on  $\cG(n,m)$ that uses $\Theta(m)$ links. 
\end{theorem}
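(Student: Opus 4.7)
The plan is to exhibit an execution of $\cA$ on $\cG(n,m)$ that uses $\Omega(m)$ links, by combining an adversarial choice of port labels with an adversarial link-failure pattern. Set $d = \lceil m/n \rceil$, so every vertex has $d+1$ ports. I use the input configuration in which every node of $G_1$ starts with input $0$ and every node of $G_2$ starts with input $1$, and pair two adversarial rules. The first is \emph{crossing-hiding}: whenever a port of a vertex $v$ is activated for the first time, the adversary declares it to lead to some as-yet-unclaimed in-half neighbor for $v$'s first $d$ activations, and declares it to be $v$'s crossing port only on the $(d+1)$-st activation. The second is \emph{crossing-killing}: the very first message transmitted along any crossing link is dropped and the link becomes unreliable, and the adversary continues dropping every subsequent message through it, as permitted by the model.

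These two rules are internally consistent: a vertex $v$ activates its crossing port only by sending through it, since any incoming cross message is killed before arrival, and by the hiding rule this only happens after $v$ has already activated $d$ in-half ports. In particular, every crossing link that ends up being used in the execution has at least one endpoint that has activated all $d+1$ of its ports.

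The central correctness claim is that all $\lceil n/2 \rceil$ crossing links must end up used. Suppose for contradiction that at most $\lceil n/2 \rceil - 1$ of them are used; then some crossing is untouched and stays reliable throughout, so the final reliable graph $G_F$ is connected and agreement forces every node to decide on the same value. To rule this out, compare our execution with the modified execution $E'$ on the same $\cG$ with the same adversary strategy but with every $G_2$ input replaced by $0$. Since no cross message is ever delivered in either scenario and $G_1$'s internal behavior is driven only by the $G_1$ inputs (which agree in the two scenarios), the adversary can commit to the same $G_1$-side labeling in both, and consequently every $G_1$ node decides the same value in both; validity applied to $E'$ forces that common value to be $0$. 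The symmetric comparison with the execution in which every $G_1$ input is replaced by $1$ forces every $G_2$ node to decide $1$ in the original execution, contradicting agreement in the single reliable component.

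Finally, let $S$ be the set of vertices that activate all $d+1$ of their ports. By the observation above, $S$ is a vertex cover of the perfect matching formed by the crossings, so $|S| \geq \lceil n/2 \rceil$. Writing $s_i = |S \cap G_i|$, each vertex of $S \cap G_i$ uses all $d$ of its in-half ports, so a standard handshake argument in the $d$-regular subgraph $G_i$ gives at least $s_i d/2$ intra-$G_i$ used edges. Adding the $\lceil n/2 \rceil$ used crossings, the total number of used links is at least $(s_1+s_2)d/2 + \lceil n/2 \rceil \geq (n/2)(d/2+1) = \Theta(m)$, since $nd = \Theta(m)$. The main technical point I need to watch is the consistency of the labeling at the cross neighbor $w$ of a vertex $v$ that has just been forced to send across: without the crossing-killing rule this event would also activate $w$'s cross port by reception and could violate $w$'s hiding invariant if $w$ had activated fewer than $d+1$ of its own ports, and killing the cross message is exactly what prevents this collision.
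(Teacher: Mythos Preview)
Your proof is correct and follows essentially the same strategy as the paper: the same graph $\cG(n,m)$, an adversary that hides each node's crossing port behind its $d$ in-half ports, and an indistinguishability argument against the all-$0$ and all-$1$ input configurations. The only differences are tactical --- you kill every cross message and argue that agreement forces all $\lceil n/2\rceil$ crossings to be used (then finish with a vertex-cover plus handshake count), whereas the paper eventually lets one cross message through and counts reliable links directly via the decreasing counter~$k_1$.
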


\begin{proof} 
If a node~$p$ executing $\cA$ sends a message by an incident link, through which it has not received nor sent any messages yet, then the receiving node could be any original neighbor of the node~$p$ in the network~$\cG=\cG(n,m)$ which has not communicated with $p$ yet.
We refer as \emph{uncovered neighbors} of~$p$ at a round to these among $p$'s neighbors that $p$ has not received a message from nor sent a direct message to yet. 
Choosing a node that is an uncovered $p$'s neighbor, identifiable only by its port at~$p$, is a part of the  adversarial strategy of failing links.

Algorithm $\cA$ gets executed on some initial configurations of the network~$\cG$, as defined above. 
Let the notation~$\cC_{i,j}$ mean an initial configuration in which the nodes from the subgraph~$G_{1}$ have $i \in \{0, 1\}$ as their input values while the nodes from the subgraph~$G_{2}$ have $j \in \{0, 1\}$.
We call a link between two nodes in $G_1$ {\em unused}, by a given round, if none of its endpoints has tried to send a message through it yet.
Let $k_{1}$ denote the number of nodes in~$G_{1}$ that have at least one unused link incident to some other node in~$G_2$. 
At the beginning, $k_{1} = |G_{1}|$ and $k_1$ may decrease in the course of an execution.

For any initial configuration $\cC_{i,j}$, where $i,j\in\{0,1\}$, consider the following adversarial strategy to fail links. 
Let a node~$p$ be in~$G_{1}$ and let $\ell$ denote a link that $p$ wants to send a message through at a round.
Suppose node~$p$ has at least two unused links but wants to send a message by only one of them.
The adversary allows the message to be delivered and the other endpoint of~$\ell$ is chosen to be an arbitrary uncovered neighbor of $p$ in~$G_{1}$. 
Suppose there is only one unused link incident to~$p$ and the node~$p$ wants to send a message through it.
If $k_{1} > 1$ then let  this link~$\ell$ fail  before it delivers a message, and otherwise, if $k_{1} = 1$, then let link~$\ell$ deliver the message  to a remaining unassigned neighbor of~$p$. 
It follows that in this case the message must go outside~$G_{1}$. 
Each time number~$k_{1}$ decreases by one, then all but one reliable links incident to some node of $G_{1}$ have been used  and they will not be failed by the adversary in the continuation of the execution. 
Hence, before the first message is delivered from some node in part~$G_{1}$ to some node in part~$G_{2}$, at least $(\left \lceil{\frac{m}{n}}\right \rceil + 1) \cdot (|G_{1}| -1)$ reliable links will have been used for communication between the nodes in~$G_1$.
A similar reasoning applies to communication within part~$G_{2}$ of~$\cG$.

Assume now, to arrive at a contradiction, that in all executions starting from the configurations $\cC_{0,0}$, $\cC_{0,1}$, and $\cC_{1,1}$, in which the above strategy has been used, fewer than $(\left \lceil{\frac{m}{n}}\right \rceil + 1) \cdot (|G_{1}| -1)$ non-faulty links have been used. 
We call these executions $\cE_{0}$, $\cE_b$, and $\cE_{1}$, respectively.
Nodes in part $G_{1}$ do not communicate with nodes in~$G_{2}$ in any of these executions. 
For nodes in the part~$G_{1}$, an execution starting from~$\cC_{0,0}$ is indistinguishable from an execution starting from~$\cC_{0,1}$. 
Similarly, for nodes in the part~$G_{2}$,  an executions starting from $\cC_{0,1}$ is indistinguishable from an execution starting from~$\cC_{1,1}$. 
It follows that each node in the part~$G_{1}$ decides on the same value in the execution~$\cE_b$  as in  the execution~$\cE_{0}$, and a decision has to be on~$0$ by validity.
Each node in the part~$G_{2}$ decides on the same value as in the execution~$\cE_{1}$, and a  decision is on~$1$, by validity. 
This is a contradiction with agreement, as both parts $G_{1}$ and $G_{2}$ are connected in $\cE_b$ and have to decide on the same value.
Therefore, in one of the considered executions, more than these many reliable links have to be used:
$( \lceil{\frac{m}{n}} \rceil + 1) \cdot (|G_{1}| -1) 
= 
( \lceil{\frac{m}{n}} \rceil + 1) \cdot (|G_{2}| -1) $, which is $\Theta(m)$.
\end{proof} 


\begin{theorem}
\label{thm:lower-time-link-use}

Let $\cA$ be a disconnected agreement algorithm that uses $\cO(n)$ reliable links concurrently when executed in networks with $n$ nodes.
For all natural numbers $n$ and $\lambda \le n$, there exists a network~$\cG$ with the stretch~$\lambda$ on which some execution of algorithm~$\cA$ takes $\Omega(n)$ rounds.   
\end{theorem}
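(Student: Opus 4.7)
The plan is to split into two regimes according to whether $\lambda \geq n/2$ or $\lambda < n/2$. In the first regime the $\cO(n)$ link-use hypothesis plays no role: for any $\lambda$ with $\lambda + 1 \leq n$, one takes a connected graph on $n$ nodes of diameter exactly $\lambda$, for instance a path $v_0, v_1, \ldots, v_{\lambda-1}$ with $n-\lambda$ pendant leaves attached to $v_0$. Its stretch equals $\lambda$, and the indistinguishability argument of Lemma~\ref{lem:graph-diameter-no-failures} (applied to two input configurations that differ only at $v_{\lambda-1}$ and to a leaf at $v_0$, which lies at distance $\lambda$) forces some execution of $\cA$ to take at least $\lambda$ rounds even with no link failures. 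Since $\lambda \geq n/2$, this is already $\Omega(n)$.

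In the second regime $\lambda < n/2$, I would build a network with a large bipartite cut. Let $V_1$ and $V_2$ be disjoint vertex sets of size $s := \lceil (n - \lambda + 2)/2 \rceil = \Theta(n)$, include all $s^2$ bipartite edges between $V_1$ and $V_2$, and append a pendant path of $\lambda - 2$ extra nodes to some vertex of $V_2$ so that the diameter becomes exactly $\lambda$ (for $\lambda \in \{1, 2\}$ the pendant path is empty and the graph is $K_n$ or $K_{n/2, n/2}$). Set the inputs to $1$ on $V_1$ and $0$ on $V_2$ and on the pendant path, and let the adversary fail every $V_1$-$V_2$ cross edge the first round it is used. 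Because $\cA$ uses at most $cn$ reliable links per round for some constant $c$, at most $cnT$ cross edges have become unreliable after $T$ rounds; so for any $T < s^2/(cn) = \Omega(n)$ at least one cross edge is still unused and thus reliable, which keeps $V_1$ and $V_2$ inside the same connected component of the reliable graph. Since no message has ever crossed the cut, an easy induction on rounds shows that each $V_1$-node has the same local view as in an execution whose $V_2$-inputs are also $1$, and symmetrically for $V_2$-nodes; validity then forces every node of $V_1$ to decide $1$ and every node of $V_2$ to decide $0$, contradicting agreement across the reliable connector.

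The main obstacle is the cut-counting argument of the second regime: one has to pin down that ``$V_1$-$V_2$ cross edge in use'' is counted in the set-union sense over rounds (an edge contributes at most once, because it becomes unreliable immediately upon first use and is no longer eligible to be used again as a reliable link), and one has to formalize the inductive claim that the two sides evolve independently in spite of an adaptive adversary whose choices depend on the algorithm's edge usages. A secondary, routine verification is that the proposed constructions realize exactly $n$ vertices and stretch $\lambda$ in the boundary regimes $\lambda \in \{1, 2\}$ and $\lambda$ close to $n/2$, which reduce respectively to $K_n$, $K_{n/2, n/2}$, and small perturbations thereof.
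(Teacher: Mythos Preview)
Your second-regime argument is essentially the paper's approach: isolate two halves behind a dense bipartite cut, have the adversary fail every cross-edge the round it is first used, and invoke indistinguishability from the all-$0$ and all-$1$ configurations to force an agreement violation across a surviving reliable cross-edge. The paper avoids your case split by building the network differently: it takes two \emph{connected} graphs $H_1$ and $H_2$, each on $n/2$ vertices and each of diameter roughly $\lambda/2$, and adds all $n^2/4$ cross-edges between them. Because the number of cross-edges is $\Theta(n^2)$ independently of $\lambda$, a single construction handles every $\lambda\le n$; and because $H_1$ and $H_2$ carry internal edges, one surviving cross-edge genuinely keeps the whole graph connected, so the final stretch stays near $\lambda$. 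In your bare $K_{s,s}$ there are no internal edges on either side, so the claim ``keeps $V_1$ and $V_2$ inside the same connected component'' is false as written---many vertices may be isolated. What you actually use, and what suffices for the contradiction, is only that the two \emph{endpoints} of one unused cross-edge are joined by a reliable link; you should phrase it that way.

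Your first-regime argument has a real gap. You write that the argument of Lemma~\ref{lem:graph-diameter-no-failures} is ``applied to two input configurations that differ only at $v_{\lambda-1}$,'' but the chain-of-configurations technique does not let the prover choose where the critical pair $C_i,C_{i+1}$ differs; that vertex is dictated by the algorithm, and the lower bound you obtain is the eccentricity of \emph{that} vertex. In your path-with-pendant-leaves graph the minimum eccentricity (the radius) is only about $\lambda/2$, so the argument cannot give ``at least $\lambda$ rounds'' as you claim. Run correctly, the chain argument yields a bound equal to the radius, which is still $\ge n/4$ in this regime and hence $\Omega(n)$; so the regime is salvageable, but not by fixing the differing vertex yourself.
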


\begin{proof} 
Let $n$ and $\lambda\le n$ be natural numbers; we assume that both $n$ and $\lambda$ are even to simplify the notations. 
Let $H_{1}$ be a network with $\frac{n}{2}$ nodes and diameter $\frac{\lambda}{2} - 1$. 
Let $H_{2}$ be a copy of $H_{1}$. 
Let us form a network $\cG$ by taking the nodes in $H_1$ and~$H_2$ and adding all possible links between the nodes in~$H_{1}$ and~$H_{2}$. 
The diameter of $\cG$ is at most $2 \cdot (\frac{\lambda}{2} - 1) + 1 = \lambda - 1$. 

The following is an adversarial strategy to fail links. 
For a round~$i$, let $K_{i}$ be a set of all the reliable links between $H_{1}$ and $H_{2}$ by which nodes attempt to send messages at this round.
The adversary fails all the links from $K_{i}$ before any message arrives, as long as $H_{1}$ and $H_{2}$ stay connected. 
Because there are $\frac{n^2}{4}$ links between the nodes in parts $H_{1}$ and $H_{2}$, this guarantees that no two nodes, of which one is in~$H_{1}$ and the other in~$H_{2}$, exchange a message during at least $\Omega(\frac{n^2}{4n}) = \Omega(n)$ rounds. 

Let us consider the following three initial configurations of~$G$. 
In the first configuration~$\cI_{1}$, all the nodes start with input values~$0$. 
In the second configuration~$\cI_{2}$, all the nodes start with inputs~$1$.
In the third configuration~$\cI_{3}$, the nodes from $H_{1}$ start with the input~$0$, while the nodes from~$H_{2}$ start with the input~$1$. 
We consider executions of algorithm~$\cA$ starting with each of the initial configurations~$\cI_{k}$, for $k=1,2,3$, when the adversary applies the strategy described above to fail edges; let $\cE_k$ be the respective execution. 
If all the nodes halt before some pair of nodes such that one is from~$H_{1}$ and the other is from~$H_{2}$ communicate among themselves, then the nodes in~$H_1$ cannot distinguish the execution~$\cE_1$ from~$\cE_3$ and the nodes in~$H_2$ cannot distinguish the execution~$\cE_2$ from~$\cE_3$.
Let us consider decisions by the nodes in execution~$\cE_3$.
The nodes in~$H_1$ decide on~$0$, as they should in~$\cE_1$, by validity, while the nodes in~$H_2$ decide on~$1$, as they should in~$\cE_2$, by validity.
This contradicts the requirement of agreement.
\end{proof} 

We conclude by settling the question if an algorithm can be simultaneously early stopping and optimize link use.

\begin{corollary}
If a disconnected agreement algorithm uses $\cO(n)$ reliable links concurrently at any time, when executed in networks of $n$ nodes, then this algorithm cannot be early stopping.
\end{corollary}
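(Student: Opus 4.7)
The plan is to derive the corollary directly from Theorem~\ref{thm:lower-time-link-use} by a short contradiction. Suppose, toward a contradiction, that some algorithm $\cA$ is both early stopping and uses $\cO(n)$ reliable links concurrently when run on networks of $n$ nodes. By the definition of early stopping given at the end of Section~\ref{sec:preliminaries}, $\cA$ halts in $\cO(\lambda)$ rounds, where $\lambda$ is the stretch at the round of halting.

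Next I would instantiate the lower bound of Theorem~\ref{thm:lower-time-link-use} with a small constant value of the stretch parameter, say $\lambda = 2$ (any constant with $\lambda \le n$ will do). That theorem supplies, for every sufficiently large $n$, a network $\cG$ on $n$ nodes whose stretch never exceeds $\lambda$ during the adversarial execution it constructs, and on which $\cA$ is forced to run for $\Omega(n)$ rounds. Because the stretch along the whole execution is monotonically nondecreasing (Proposition~\ref{pro:stretch}) and bounded by the value that appears in the lower-bound construction, the stretch at the halting round is at most this constant $\lambda$.

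Combining these two facts gives the contradiction: the early-stopping hypothesis yields a running time of $\cO(\lambda) = \cO(1)$ on this family of networks, whereas Theorem~\ref{thm:lower-time-link-use} forces the running time to be $\Omega(n)$. For $n$ large enough these two bounds are incompatible, so no algorithm can simultaneously be early stopping and use $\cO(n)$ reliable links concurrently.

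The argument is essentially a direct invocation; the only subtlety, and the one step I would state carefully, is matching the two notions of ``stretch'' that appear: the $\lambda$ in the definition of early stopping refers to the stretch at the moment of halting, while the $\lambda$ in Theorem~\ref{thm:lower-time-link-use} parametrizes the adversarial construction. Monotonicity of stretch (Proposition~\ref{pro:stretch}) together with the fact that the construction in the proof of Theorem~\ref{thm:lower-time-link-use} keeps the two halves $H_1,H_2$ connected via unused cross-links throughout the $\Omega(n)$ rounds — so the stretch stays at most $\lambda$ until halting — bridges the two usages and legitimizes the contradiction.
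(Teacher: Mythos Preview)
Your proposal is correct and follows the same approach as the paper: both derive the contradiction by invoking Theorem~\ref{thm:lower-time-link-use} with a stretch that is $o(n)$ (the paper uses a generic sequence $\lambda_n = o(n)$, you pick a constant), yielding $\Omega(n)$ running time versus the $\cO(\lambda)$ bound required by early stopping. One small caveat: the specific choice $\lambda = 2$ does not fit the construction in the proof of Theorem~\ref{thm:lower-time-link-use}, since there each half $H_i$ must have diameter $\lambda/2 - 1$, which for $\lambda=2$ forces a single vertex; taking $\lambda = 4$ (or any fixed constant $\ge 4$) fixes this and the rest of your argument goes through unchanged.
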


\begin{proof}
Let us consider integers $n$ and $\lambda_n\le n$ such that $\lambda_n=o(n)$.
By Theorem~\ref{thm:lower-time-link-use}, the algorithm works in $\Omega(n)$ rounds.
The algorithm cannot be early stopping, because this would mean running in time $\cO(\lambda_n)=o(n)$.
\end{proof}

\section{Conclusion}

We introduced the problem of disconnected agreement in the model of networks with links prone to failures such that faulty links may omit messages.
Disconnected agreement is a variant of consensus that has the agreement condition reformulated such that nodes have to agree on the same value only if they belong to the same connected component of a network obtained by removing the faulty links.
The number of values that nodes decide on can be as large as the number of connected components.
As far as allowing for different decision values, the problem is similar to $k$-set agreement, in which agreement is relaxed such that nodes may agree on at most~$k$ values, for a positive integer~$k$.
The essential difference between the two problems is that an acceptable number of decision values in $k$-set agreement is known in advance while the disconnected agreement  has the admissible number of decision values enforced by the changes of topology, and in particular if the network stays connected then all the nodes have to decide on the same value. 
An interesting aspect of $k$-set agreement, as compared to consensus, is that it can be solved in time $\lfloor\frac{t}{k}\rfloor+1$ with up to~$t$ node crashes in synchronous message-passing systems, and this time is necessary, see~\cite{ChaudhuriHLT00}, while consensus requires $t+1$ rounds.
The motivation for disconnected agreement is not to relax consensus such that it can be solved faster, but rather to investigate what running time and size of messages is needed to solve a problem defined by (1)~not imposing any restrictions on which links in a network fail and omit messages, and (2)~minimally relaxing the agreement condition such that solutions exist.
We showed such trade-offs between running time and size of messages, and also looked into minimizing the number of used links.
A possible future direction of work concerns more severe link faults, for example such that result in delivering forged messages.

We demonstrated that a dynamic stretch of a network is a meaningful parameter that serves as a yardstick to measure running time efficiency of disconnected agreement algorithms.
The most running-time efficient disconnected-agreement algorithms are early stopping in that they work in~$\cO(\lambda)$ rounds, where $\lambda$ is the stretch occurring in an execution.
We showed how to design such an algorithm that uses  messages of $\cO(m\log n)$ bits.
Trading off running time for size of messages, we developed an algorithm using smaller messages of only $\cO(n\log n)$ bits that runs in $(\lambda+2)^3$ rounds.
The  running time $(\lambda+2)^3$ of this disconnected-agreement algorithm could possibly be improved asymptotically, while still using messages of $\cO(n\log n)$ bits, but we are sceptical if it is possible to design an early stopping disconnected-agreement algorithm that uses messages of a size that is significantly smaller than $\Theta(m\log n)$ bits. 

We measure the communication efficiency of algorithms by the size of individual messages or the number of non-faulty links used.
This approach allows to demonstrate apparent trade-offs between running time and communication.
One could study dependencies of the running time and the total number of messages exchanged or the total number of bits in messages sent by nodes executing disconnected-agreement algorithms.

\bibliography{disconnected-agreement}

\end{document}